\title{Numerical Composition of Differential Privacy\footnote{Author ordering is alphabetical. Code is available at \url{https://github.com/microsoft/prv_accountant}.}}
\author[1]{Sivakanth Gopi}
\author[2]{Yin Tat Lee}
\author[1]{Lukas Wutschitz}
\affil[1]{Microsoft\\
\texttt{\{sigopi,lukas.wutschitz\}@microsoft.com}}
\affil[2]{University of Washington\\
\texttt{yintat@uw.edu}}
\date{}
\newcommand{\R}{\mathbb{R}} 
\newcommand{\Z}{\mathbb{Z}} 
\newcommand{\cA}{\mathcal A}
\newcommand{\cM}{\mathcal M}
\newcommand{\cN}{\mathcal N}
\newcommand{\set}[1]{\{#1\}}
\renewcommand{\epsilon}{\varepsilon}
\newcommand{\eps}{\epsilon}
  \newcommand{\beq}{\begin{equation}}
  \newcommand{\eeq}{\end{equation}}
  \newcommand{\beqn}{\begin{equation*}}
  \newcommand{\eeqn}{\end{equation*}}
  \newcommand{\beqr}{\begin{eqnarray}}
  \newcommand{\eeqr}{\end{eqnarray}}
  \newcommand{\beqrn}{\begin{eqnarray*}}
  \newcommand{\eeqrn}{\end{eqnarray*}}
  \newcommand{\bmline}{\begin{multline}}
  \newcommand{\emline}{\end{multline}}
  \newcommand{\bmlinen}{\begin{multline*}}
  \newcommand{\emlinen}{\end{multline*}}
\newcounter{mynotes}
\def\notes{1}
\newcommand{\gnote}[1]{\ifnum\notes=1{{\sf\color{blue} [Gopi: #1]}}\fi}
\newcommand{\ynote}[1]{\ifnum\notes=1{{\sf\color{red} [YinTat: #1]}}\fi}
\declaretheorem[within=section]{theorem}
\declaretheorem[sibling=theorem]{corollary}
\declaretheorem[sibling=theorem]{lemma}
\declaretheorem[sibling=theorem]{definition}
\declaretheorem[sibling=theorem]{proposition}
\declaretheorem[sibling=theorem]{remark}
\newcommand{\Lap}{\mathsf{Lap}}
\DeclareMathOperator{\E}{\mathbb{E}}
\newcommand{\CDF}{\mathrm{CDF}}
\newcommand{\ty}{\widetilde{y}}
\newcommand{\tY}{\widetilde{Y}}
\newcommand{\tX}{\widetilde{X}}
\newcommand{\error}{\mathrm{error}}
\newcommand{\upper}{\mathrm{upper}}
\newcommand{\lp}{\left(}
\newcommand{\lb}{\left[}
\newcommand{\rb}{\right]}
\newcommand{\rp}{\right)}
\newcommand{\bR}{\overline{\R}}
\begin{document}

\maketitle

\begin{abstract}
We give a fast algorithm to optimally compose privacy guarantees of differentially private (DP) algorithms to arbitrary accuracy. Our method is based on the notion of \emph{privacy loss random variables} to quantify the privacy loss of DP algorithms. 
The running time and memory needed for our algorithm to approximate the privacy curve of a DP algorithm composed with itself $k$ times is $\tilde{O}(\sqrt{k})$. This improves over the best prior method by Koskela et al.~\cite{koskela2021computing} which requires $\tilde{\Omega}(k^{1.5})$ running time. We demonstrate the utility of our algorithm by accurately computing the privacy loss of DP-SGD algorithm of Abadi et al.~\cite{Abadi16} and showing that our algorithm speeds up the privacy computations by a few orders of magnitude compared to prior work, while maintaining similar accuracy.
\end{abstract}

\newpage
\tableofcontents
\newpage

\section{Introduction}
\label{sec:intro}

Differential privacy (DP) introduced by~\cite{DMNS06} provides a provable and quantifiable guarantee of privacy when the results of an algorithm run on private data are made public. Formally, we can define an $(\eps,\delta)$-differentially private algorithm as follows.
\begin{definition}[$(\eps,\delta)$-DP~\cite{DMNS06,dwork2006our}]
	An algorithm $\cM$ is $(\eps,\delta)$-DP if for any two neighboring databases $D,D'$ differing in exactly one user and any subset $S$ of outputs, we have $\Pr[\cM(D)\in S] \le e^\eps\Pr[\cM(D')\in S]+\delta.$
\end{definition}
Intuitively, it says that looking at the outcome of $\cM$, we cannot tell whether it was run on $D$ or $D'$. Hence an adversary cannot infer the existence of any particular user in the input database, and therefore cannot learn any personal data of any particular user. 

DP algorithms have an important property called \emph{composition}. Suppose $M_1$ and $M_2$ are DP algorithms and say $M(D)=(M_1(D),M_2(D))$, i.e., $M$ runs both the algorithms on $D$ and outputs their results. Then $M$ is also a DP algorithm. 

\begin{proposition}[Simple composition \cite{dwork2006our,dwork2009differential}]
\label{prop:simple_composition}
If $M_1$ is $(\eps_1,\delta_1)$-DP and $M_2$ is $(\eps_2,\delta_2)$-DP, then $M(D)=(M_1(D),M_2(D))$ is $(\eps_1+\eps_2,\delta_1+\delta_2)$-DP.
\end{proposition}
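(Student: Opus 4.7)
The plan is to bound $\Pr[M(D)\in S]$ against $\Pr[M(D')\in S]$ for an arbitrary output event $S$ and an arbitrary neighboring pair $D,D'$, by exploiting independence of $M_1$ and $M_2$ to reduce to the two individual DP guarantees applied in sequence.

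First I would use independence to write $\Pr[M(D)\in S]=\sum_{y_1}\Pr[M_1(D)=y_1]\cdot \Pr[M_2(D)\in S_{y_1}]$, where $S_{y_1}=\{y_2:(y_1,y_2)\in S\}$ is the slice of $S$ at $y_1$. Next I would apply the $(\eps_2,\delta_2)$-DP guarantee to each inner probability, giving $\Pr[M_2(D)\in S_{y_1}]\le e^{\eps_2}\Pr[M_2(D')\in S_{y_1}]+\delta_2$ for every $y_1$. To process the outer average I would prove a short auxiliary lemma: if $M_1$ is $(\eps_1,\delta_1)$-DP and $f:\cR_1\to[0,1]$, then $\E_{y_1\sim M_1(D)}[f(y_1)]\le e^{\eps_1}\,\E_{y_1\sim M_1(D')}[f(y_1)]+\delta_1$. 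This follows from the layer-cake identity $\E[f]=\int_0^1\Pr[f\ge t]\,dt$ by applying the DP bound to each level set $\{y_1:f(y_1)\ge t\}$ and integrating in $t$. Taking $f(y_1)=\Pr[M_2(D')\in S_{y_1}]$ then chains the two DP bounds into an inequality of the desired shape.

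The main obstacle is tightening the $\delta$ terms: the naive chaining above leaves $e^{\eps_2}\delta_1+\delta_2$ on the right-hand side, while the statement asks for $\delta_1+\delta_2$. To close this gap I would invoke the standard ``approximate-DP as pure-DP on a high-probability event'' characterization of $(\eps,\delta)$-DP: each $(\eps_i,\delta_i)$-DP mechanism can be viewed, up to a suitable coupling, as $\eps_i$-DP conditioned on an event whose probability is at least $1-\delta_i$ under both $D$ and $D'$. Pure DP composes additively without multiplicative loss in $\delta$ — this is an immediate likelihood-ratio calculation since under independence the privacy losses simply add. A union bound over the two individual bad events yields a joint bad event of probability at most $\delta_1+\delta_2$, and converting back recovers the stated $(\eps_1+\eps_2,\delta_1+\delta_2)$ guarantee. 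Establishing (or quoting) the pure-DP-on-a-good-event characterization is the only nontrivial step; the rest is bookkeeping.
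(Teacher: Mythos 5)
The paper states Proposition~\ref{prop:simple_composition} as a known result with citations to \cite{dwork2006our,dwork2009differential} and gives no proof of its own, so there is no in-paper argument to compare against; I am judging your proposal on its own terms.

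Your first half is correct and nicely phrased: factoring $\Pr[M(D)\in S]=\E_{y_1\sim M_1(D)}\bigl[\Pr[M_2(D)\in S_{y_1}]\bigr]$, applying the $(\eps_2,\delta_2)$ bound inside, and then using the layer-cake lemma $\E_{M_1(D)}[f]\le e^{\eps_1}\E_{M_1(D')}[f]+\delta_1$ for $f\colon\cR_1\to[0,1]$ indeed gives $\Pr[M(D)\in S]\le e^{\eps_1+\eps_2}\Pr[M(D')\in S]+e^{\eps_2}\delta_1+\delta_2$, and you correctly identify that the $e^{\eps_2}\delta_1$ term is the remaining obstacle.

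The second half, however, leans on a ``characterization'' that is not correct in the form you state it, and this is exactly the delicate point. Conditioning $M_1(D)$ and $M_1(D')$ on a common output event $E$ does \emph{not} preserve pure DP: the conditional likelihood ratio picks up the uncontrolled factor $\Pr[M_1(D')\in E]/\Pr[M_1(D)\in E]$. Likewise, if you take the pointwise minimum $g_1=\min(p_1,e^{\eps_1}q_1)$ and $g_1'=\min(q_1,e^{\eps_1}p_1)$ and normalize each into a distribution, the normalized pair is only $\bigl(\eps_1+|\log\tfrac{1-\alpha'}{1-\alpha}|\bigr)$-indistinguishable (where $\alpha,\alpha'$ are the trimmed masses), not $\eps_1$-indistinguishable. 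The rigorous mixture/coupling characterizations in the literature (Kasiviswanathan--Smith; Murtagh--Vadhan) come with a $\delta$-parameter that is rescaled by a factor involving $e^{\eps}$, so quoting them as a black box would not yield exactly $\delta_1+\delta_2$ either. The clean fix is to drop both conditioning and normalization and work with the \emph{unnormalized} pointwise split $p_1=g_1+r_1$ with $r_1=(p_1-e^{\eps_1}q_1)_+$: then $\sum_{y_1}r_1(y_1)=\Pr[M_1(D)\in B_1]-e^{\eps_1}\Pr[M_1(D')\in B_1]\le\delta_1$ where $B_1=\{y_1\colon p_1(y_1)>e^{\eps_1}q_1(y_1)\}$, and one bounds
$$\Pr[M(D)\in S]\le\sum_{y_1}r_1(y_1)\cdot 1+\sum_{y_1}g_1(y_1)\bigl(e^{\eps_2}\Pr[M_2(D')\in S_{y_1}]+\delta_2\bigr)\le\delta_1+e^{\eps_1+\eps_2}\Pr[M(D')\in S]+\delta_2,$$
using $g_1\le e^{\eps_1}q_1$ pointwise and $\sum g_1\le 1$. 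This makes the layer-cake step unnecessary, avoids the normalization issue entirely, and is essentially the precise version of the heuristic you were reaching for. Your remaining bookkeeping (union-bounding the two bad masses) is fine once you make this substitution.
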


This also holds under \emph{adaptive composition} (denoted by $M=M_2\circ M_1$), where $M_2$ can look at both the database and the output of $M_1$.\footnote{Here $M(D)=(M_1(D),M_2(D,M_1(D)))$.} It turns out that both compositions enjoy much better DP guarantees than this simple composition rule. Let $M$ be an $(\eps,\delta)$-DP algorithm and let $M^{\circ k}$ denote the (adaptive) composition of $M$ with itself $k$ times. The naive composition rule shows that $M^{\circ k}$ is $(k\eps,k\delta)$-DP. This was significantly improved in~\cite{dwork2010boosting}.

\begin{proposition}[Advanced composition \cite{dwork2010boosting,DR14}]
\label{prop:advanced_composition}
	If $M$ is $(\eps,\delta)$-DP, then $M^{\circ k}$ is $(\eps',k\delta+\delta')$-DP where $$\eps'=\eps\sqrt{2k\log\lp \frac{1}{\delta'}\rp}+k\eps(e^\eps -1).$$
\end{proposition}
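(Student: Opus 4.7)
The plan is to bound the cumulative privacy loss of $M^{\circ k}$ via a martingale concentration argument. First I would pass to the \emph{privacy loss random variable}: for neighboring databases $D,D'$ and output $o$, set
\[
 Z(o) \;=\; \log\frac{\Pr[M(D)=o]}{\Pr[M(D')=o]}.
\]
The starting point is the standard characterization that $(\eps,\delta)$-DP is equivalent (up to a set of measure $\delta$) to $Z$ lying pointwise in $[-\eps,\eps]$. Using this direction, I reduce to a ``clean'' mechanism whose privacy loss is deterministically bounded by $\eps$, paying a failure probability of at most $\delta$ per round and hence $k\delta$ across all $k$ rounds by a union bound.

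For adaptive composition, let $o_{<i}$ denote the outputs of the first $i-1$ rounds and define the conditional privacy losses
\[
 Z_i \;=\; \log\frac{\Pr[M_i(D,o_{<i})=o_i\mid o_{<i}]}{\Pr[M_i(D',o_{<i})=o_i\mid o_{<i}]},
\]
so that the total privacy loss of $M^{\circ k}$ equals $\sum_{i=1}^k Z_i$. Two facts then drive the argument: (i) each $Z_i\in[-\eps,\eps]$ almost surely by the reduction above; and (ii) a short KL-divergence estimate gives $\E[Z_i\mid o_{<i}]\le \eps(e^\eps-1)$, since for distributions with likelihood ratio in $[e^{-\eps},e^\eps]$ one has $D(P\|Q)\le 2\eps\cdot\mathrm{TV}(P,Q)\le \eps(e^\eps-1)$.

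Next, apply Azuma--Hoeffding to the martingale differences $X_i = Z_i - \E[Z_i\mid o_{<i}]$. Conditional on $o_{<i}$ each $X_i$ lies in an interval of width at most $2\eps$, so the one-sided bound gives
\[
 \Pr\!\left[\sum_{i=1}^k X_i > t\right] \;\le\; \exp\!\left(-\frac{t^2}{2k\eps^2}\right),
\]
and the choice $t=\eps\sqrt{2k\log(1/\delta')}$ makes this at most $\delta'$. Adding the mean bound $\sum_i \E[Z_i\mid o_{<i}]\le k\eps(e^\eps-1)$ yields
\[
 \sum_{i=1}^k Z_i \;\le\; k\eps(e^\eps-1) + \eps\sqrt{2k\log(1/\delta')} \;=\; \eps'
\]
outside an event of probability at most $\delta'+k\delta$. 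Finally, the converse direction of the privacy-loss characterization converts this tail bound on $\sum_i Z_i$ into the claim that $M^{\circ k}$ is $(\eps',\,k\delta+\delta')$-DP.

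I expect the main obstacle to be the two-sided equivalence between $(\eps,\delta)$-DP and tail bounds on the privacy loss: both directions require a careful decomposition of the output distribution (a ``dominating pair'' or Kairouz--Oh--Viswanath style argument), and they must be executed so that no spurious constant or $\log$ factor creeps into $\eps'$, since the statement matches the advanced composition bound \emph{exactly}.
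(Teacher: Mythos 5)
The paper does not prove Proposition~\ref{prop:advanced_composition}; it is cited from \cite{dwork2010boosting,DR14}. So there is no in-paper proof to compare against, and I can only assess your sketch on its own terms.

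Your outline is the standard Dwork--Rothblum--Vadhan martingale argument, and the quantitative pieces you use do recover the stated $\eps'$: with $Z_i\in[-\eps,\eps]$ the martingale increments $X_i=Z_i-\E[Z_i\mid o_{<i}]$ lie in an interval of width $2\eps$, so Azuma gives $\Pr[\sum_i X_i>t]\le \exp(-t^2/(2k\eps^2))$, and $t=\eps\sqrt{2k\log(1/\delta')}$ yields $\delta'$; and the chain $D(P\|Q)\le \int |P-Q|\cdot \eps\le \eps(e^\eps-1)$ (using $|P/Q-1|\le e^\eps-1$) matches the mean term exactly. These are not the weak links.

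The gap is the one you flag, and it is worth being precise about why. Your ``starting point'' --- that $(\eps,\delta)$-DP is equivalent to the privacy loss $Z$ lying in $[-\eps,\eps]$ except on a set of measure $\delta$ --- is not a correct characterization. What $(\eps,\delta)$-DP gives you is $\E_Y[(1-e^{\eps-Y})_+]\le\delta$ (cf.~the paper's Theorem~\ref{thm:PRV_privacycurve}), from which one can extract a one-sided tail bound $\Pr[Y>\eps]\lesssim\delta$; it does \emph{not} give a two-sided pointwise bound, nor does a tail bound on $Z$ conversely imply DP at the same $\delta$. Consequently, neither the ``$Z_i\in[-\eps,\eps]$ almost surely'' needed for Azuma nor the ``$k\delta$ by a union bound'' bookkeeping follows directly from the stated equivalence. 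The correct intermediary is the DRV decomposition lemma (Lemma~3.17 in \cite{DR14}): for an $(\eps,\delta)$-DP mechanism there exist auxiliary distributions $P',Q'$ with $D_\infty(P'\|Q'),D_\infty(Q'\|P')\le\eps$ and $d_{TV}(M(D),P'),d_{TV}(M(D'),Q')\le\delta$. The Azuma argument is then run on the primed (pure-DP) pair, and the $k\delta$ is paid once at the end by a data-processing / hybrid argument on total variation, not by round-by-round union-bounding a privacy-loss event. Making this substitution turns your sketch into a complete proof; without it, the step where you pass from approximate to pure DP does not quite type-check.
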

Note that if $\eps=O\lp \frac{1}{\sqrt{k}}\rp$ and $\delta=o\lp\frac{1}{k}\rp$, then $M^{\circ k}$ satisfies $(O_{\delta'}(1),\delta')$-DP. Using simple composition (Proposition~\ref{prop:simple_composition}), we can only claim that $M^{\circ k}$ is $(O(\sqrt{k}),o(1))$-DP. Thus advanced composition often results in $\sqrt{k}$-factor savings in privacy which is significant in practice. The optimal DP guarantees for $k$-fold composition of an $(\eps,\delta)$-DP algorithm were finally obtained by~\cite{kairouz2015composition}. For composing different algorithms, the situation is more complicated. If $M_1,M_2,\dots,M_k$ are DP algorithms such that $M_i$ is $(\eps_i,\delta_i)$-DP, then it is shown by \cite{murtagh2016complexity} that computing the \emph{exact} DP guarantees for $M=M_1 \circ M_2 \circ \dots \circ M_k$ is \#P-complete. They also give an algorithm to approximate the DP guarantees of $M$ to desired accuracy $\eta$ which runs in 
\begin{equation}
\label{eqn:time_murtaghsalil}
\tilde{O}\lp \frac{k^3 \bar{\eps}(1+\bar{\eps})}{\eta}\rp
\end{equation}
time where $\bar{\eps}=(\sum_{i=1}^k \eps_i)/k.$\footnote{$\eps$ has an additive error of $\eta$ and $\delta$ has a multiplicative error of $\eta$.} If each $\eps_i\approx \frac{1}{\sqrt{k}}$ (so that $M$ will satisfy reasonable privacy guarantees by advanced composition), then the running time is $\tilde{O}(k^{2.5}/\eta).$

In most situations, DP algorithms come with a collection of $(\eps,\delta)$-DP guarantees, i.e., for each value of $\eps$, there exists $\delta$ such that the algorithm is $(\eps,\delta)$-DP.
\begin{definition}[Privacy curve]
A DP algorithm $M$ is said to have privacy curve $\delta:\R \to [0,1]$, if for every $\eps\in \R$, $M$ is $(\eps,\delta(\eps))$-DP.
\end{definition}
For example the privacy curve of a Gaussian mechanism (with sensitivity $1$ and noise scale $\sigma$) is given by $\delta(\eps) = \Phi\lp -\eps \sigma+1/2\sigma \rp - e^\eps \Phi\lp -\eps\sigma-1/2\sigma \rp$ where $\Phi(\cdot)$ is the Gaussian CDF~\cite{BalleW18}.
Suppose we want to compose several Gaussian mechanisms, which $(\eps,\delta)$-DP guarantee should we choose for each mechanism? Any choice will lead to suboptimal DP guarantees for the final composition. Instead, we need a way to compose the privacy curves directly. This was suggested through the use of privacy region in~\cite{kairouz2015composition} and explicitly studied in the $f$-DP framework of~\cite{DongRS19}. $f$-DP is a dual way (and equivalent) to look at the privacy curve $\delta(\eps).$ 

Independently, an algorithm called \emph{Privacy Buckets} for approximately composing privacy curves using the notion of was initiated in~\cite{meiser2018tight}. This algorithm depends on the notion of \emph{Privacy Loss Random Variable} (PRV)~\cite{DR16}, whose distribution is called \emph{Privacy Loss Distribution} (PLD). For any DP-algorithm, one can associate a PRV and the privacy curve of that algorithm can be easily obtained from the PRV. The really useful property of PRVs is that under adaptive composition, they just add up; the PRV $Y $of the composition $M=M_1\circ M_2 \circ \dots \circ M_k$ is given by $Y=\sum_{i=1}^k Y_i$ where $Y_i$ is the PRV of $M_i.$\footnote{\cite{koskela2020computing} only state this for non-adaptive composition. In this paper we show how to extend this to adaptive composition as well.} Therefore, one can find the distribution of $Y$ by the convolution of the distributions of $Y_1,Y_2,\dots,Y_k$. In an important paper, \cite{koskela2020computing} proposed that one can speed up the convolutions using Fast Fourier Transform (FFT). Explicit error bounds were obtained for the approximation obtained by their algorithm in~\cite{koskela2020computing,KoskelaJPH21,koskela2021computing}. The running time of this algorithm was analyzed in~\cite{koskela2021computing} where it was shown that the privacy curve $\delta_M(\eps)$ of $M=M_1\circ M_2 \circ \dots \circ M_k$ can be computed up to an additive error of $\delta_\error$ in time
\begin{equation}
	\label{eqn:time_Koskela}
\tilde{O}\lp\frac{k^3\bar{\eps}}{\delta_\error}\rp,	
\end{equation}
if each algorithm $M_i$ is satisfies $(\eps_i,0)$-DP and $\bar{\eps}=\frac{1}{k}\sum_{i=1}^k \eps_i$. Assuming that each $\eps_i\approx \frac{1}{\sqrt{k}}$, we get $\tilde{O}(k^{2.5}/\delta_\error)$ running time. Note that this is slightly worse than (\ref{eqn:time_murtaghsalil}), where the denominator $\eta$ is the multiplicative error in $\delta_M$. When composing the same algorithm with itself for $k$ times, the running time can be improved to $\tilde{O}\lp\frac{k^2\bar{\eps}}{\delta_\error}\rp$, which is $\tilde{O}\lp\frac{k^{1.5}}{\delta_\error}\rp$ when $\bar{\eps}=\frac{1}{\sqrt{k}}$.

\paragraph{Moments Accountant and Renyi DP} In an influential paper where they introduce Differentially Private Deep Learning, \cite{Abadi16} proposed a method called the Moments Accountant (MA) for giving an upper bound the privacy curve of a composition of DP algorithms. They applied their method to bound the privacy loss of differentially private Stochastic Gradient Descent (DP-SGD) algorithm which they introduced. Analyzing the privacy loss of DP-SGD involves composing the privacy curve of each iteration of training with itself $k$ times, where $k$ is the total of number of training iterations. Typical values of $k$ range from $1000$ to $300000$ (such as when training large models like GPT3). The Moments Accountant was subsumed into the framework of Renyi Differential Privacy (RDP) introduced by~\cite{mironov2017renyi}. The running time of these accountants are independent of $k$, but they only give an upper bound and cannot approximate the privacy curve to arbitrary accuracy.

 DP-SGD is one of the most important DP algorithms in practice, because one can use it to train neural networks to achieve good privacy-vs-utility tradeoffs. Therefore obtaining accurate and tight privacy guarantees for DP-SGD is important. For example reducing $\eps$ from $2$ to $1$, can mean that one can train the network for 4 times more epochs while staying within the same privacy budget. Therefore DP-SGD is one of the main motivations for this work. 

 There are also situations when the PRVs do not have bounded moments and so Moments Accountant or Renyi DP cannot be applied for analyzing privacy. An example of such an algorithm is the DP-SGD-JL algorithm of \cite{bu2021fast} which uses numerical composition of PRVs to analyze privacy.

\paragraph{GDP Accountant} \cite{DongRS19,BuDLS19} introduced the notion of Gaussian Differential Privacy (GDP) and used it to develop an accountant for DP-SGD. The accountant is based on central limit theorem and only gives an approximation to the true privacy curve, where the approximation gets better with $k$. But as we show in Figure~\ref{fig:ours_vs_Koskela}, GDP accountant can significantly underreport the true epsilon value.

Several different notions of privacy were introduced for obtaining good upper bounds on the privacy curve of composition of DP algorithms such as Concentrated DP (CDP)~\cite{DR16,BS16}, Truncated CDP~\cite{bun2018composable} etc. None of these methods can approximate the privacy curve of compositions to arbitrary accuracy. The notion of $f$-DP introduced by~\cite{DongRS19}, allows for a lossless composition theorem, but computing the privacy curve of composition seems computationally hard and they do not give any algorithms for doing it.

\subsection{Our Contributions}
The main contribution of this work is a new algorithm with an improved analysis for computing the privacy curve of the composition of a large number of DP algorithms.
\begin{theorem}[Informal version of Theorem~\ref{thm:approximation_eps_upper_bound}]
    \label{thm:informal_main}
	Suppose $M_1,M_2,\dots,M_k$ are DP algorithms. Then the privacy curve $\delta_M(\eps)$ of adaptive composition $M=M_1 \circ M_2 \circ \dots \circ M_k$ can be approximated in time
	\begin{equation}
		\label{eqn:time_ours}
		O\lp\frac{\eps_\upper \ k^{1.5}\log{k}\sqrt{\log \frac{1}{\delta_\error}}}{\eps_\error}\rp,
	\end{equation}
	where $\eps_\error$ is the additive error in $\eps$, $\delta_\error$ is the additive error in $\delta$ and $\eps_\upper$ is an upper bound on $\max\left\{\eps_{M}(\delta_\error),\max_i \eps_{M_i}\lp\frac{\delta_\error}{k}\rp\right\}.$\footnote{$\eps_{M}(\delta)$ is the inverse of $\delta_{M}(\eps)$.}
\end{theorem}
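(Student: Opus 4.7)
The plan is to work within the Privacy Loss Random Variable (PRV) framework of Dwork--Rothblum, building on the numerical composition ideas of Meiser--Mohammadi and Koskela et al. First, I would associate to each mechanism $M_i$ its PRV $Y_i$ and argue that the PRV $Y$ of the adaptive composition $M = M_1 \circ \cdots \circ M_k$ can be written as $Y \stackrel{d}{=} \sum_{i=1}^{k} Y_i$; in the adaptive setting this requires a step-by-step Radon--Nikodym-derivative argument showing that at each step the conditional PRV is (stochastically) at most that of the worst pair, reducing the adaptive case to an independent sum as in the non-adaptive case. Once this reduction is in place, $\delta_M(\eps)$ is an explicit functional of the distribution of $Y$ (essentially $\delta_M(\eps) = \E[(1-e^{\eps-Y})_+]$), so it suffices to approximate this distribution numerically.

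The next step is to bound the effective support of $Y$. Since $\eps_\upper \ge \eps_{M_i}(\delta_\error/k)$, each $Y_i$ lies in $[-\eps_\upper,\eps_\upper]$ except on an event of probability $\delta_\error/k$; a union bound handles all $k$ such events with total mass $\delta_\error$. Conditioned on the good event, a Bernstein-type inequality (with the standard adaptive/martingale extension) gives that $Y$ concentrates in a window of radius $L = O(\eps_\upper \sqrt{k\log(1/\delta_\error)})$ around its mean, so truncating outside this window changes $\delta_M(\eps)$ by at most $O(\delta_\error)$.

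After truncation, I would discretize each $Y_i$ on a uniform grid of mesh $h = \Theta(\eps_\error)$, giving $N = 2L/h = O(\eps_\upper \sqrt{k\log(1/\delta_\error)}/\eps_\error)$ grid points. The $k$-fold convolution of the discretized PMFs is then evaluated by FFT in $O(kN\log N) = O(\eps_\upper k^{1.5}\log k \sqrt{\log(1/\delta_\error)}/\eps_\error)$ operations, matching the claimed bound; in the self-composition case, one FFT followed by a pointwise $k$-th power and an inverse FFT shaves the extra factor of $k$, recovering the $\tilde{O}(\sqrt{k})$ cost advertised in the abstract.

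The main obstacle will be the error analysis of the discretization step. Rounding each $Y_i$ naively to the nearest grid point produces $h$ worst-case error per step, which a priori accumulates to $O(kh)$ along the $\eps$-axis and would force $h = O(\eps_\error/k)$, inflating the running time by an extra factor of $k$. To reach the claimed bound I would instead use a pair of ``pessimistic'' and ``optimistic'' roundings: one producing a discretized PRV that stochastically dominates $Y_i$ (giving a valid upper bound on $\delta_M$) and another dominated by $Y_i$ (giving a valid lower bound). Proving that these two constructions genuinely sandwich the true distribution of $Y$ after $k$ convolutions --- so that $h = \Theta(\eps_\error)$ really does yield an $\eps_\error$-accurate upper/lower bound on the privacy curve --- while simultaneously keeping FFT aliasing on the truncated interval under control, is the technical heart of the argument.
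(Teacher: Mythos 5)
Your high-level plan (PRVs add under composition, truncate, discretize, FFT) matches the paper's, and your operation count comes out right, but the way you get there has a genuine gap in the discretization step that the operation count papers over.

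The paper's key technical move is \emph{mean-preserving} discretization combined with a Hoeffding/coupling argument: the subroutine \textsf{DiscretizePRV} shifts the discretized random variable so that $\E[\tY_i]=\E[Y_i^L]$, and then Lemma~\ref{lem:coupling_sum_independent} shows that the $k$ independent discretization errors (each bounded by $h$ but now mean-zero) concentrate at scale $h\sqrt{k\log(1/\eta)}$ rather than $kh$. This is precisely what lets them take $h=\Theta\bigl(\eps_\error/\sqrt{k\log(1/\delta_\error)}\bigr)$ and still get $\eps_\error$ total shift. You instead propose ``pessimistic'' and ``optimistic'' roundings that stochastically dominate and are dominated by each $Y_i$. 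That sandwiching argument is sound as a \emph{validity} argument --- it is essentially what \cite{koskela2021computing} do --- but it does not control the \emph{gap} between the two bounds: pessimistic rounding pushes each $Y_i$ up by as much as $h$, so after $k$ convolutions the pessimistic approximation can sit a full $kh$ to the right of the optimistic one along the $\eps$-axis. With your $h=\Theta(\eps_\error)$ this gives a gap of $\Theta(k\eps_\error)$, not $\eps_\error$; to close it you would need $h=O(\eps_\error/k)$, which is exactly the mesh size \cite{koskela2021computing} are forced to use and exactly what gives the extra $k$ factor you are trying to avoid. You identify this as ``the technical heart of the argument'' --- correctly --- but the proposed fix does not supply the cancellation mechanism; only the mean-matching plus concentration bound does.

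Your truncation step also departs from the paper. You bound $L$ by a union bound on the per-step tails plus a Bernstein inequality on the sum, landing on $L=O\bigl(\eps_\upper\sqrt{k\log(1/\delta_\error)}\bigr)$. The paper instead proves a direct tail bound on the PRV of the \emph{composed} mechanism (Lemma~\ref{lem:PRVs_tailbounds}): if $M$ is $(\eps,\delta)$-DP then $\Pr[|Y|\ge\eps+2]\le\tfrac{4}{3}\delta$, which yields $L=O(\eps_\upper)$ with no $\sqrt{k}$. Numerically your larger $L$ and larger $h$ happen to cancel and produce the same $N=L/h$, which is why your final count looks right, but it is an accident of bookkeeping: the large $L$ you are paying for is covering slack introduced by a discretization error analysis that, as argued above, cannot actually be made tight at $h=\Theta(\eps_\error)$. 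Fixing the discretization with the mean-matching idea would then let you also use the tighter $L$, which is the route the paper takes.

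Minor point: for the reduction of adaptive to non-adaptive composition, the paper does not run a conditional Radon--Nikodym argument; it treats PRVs as a reparametrization of privacy curves and simply invokes the $f$-DP composition theorem of Dong--Roth--Su (Theorem~\ref{prop:composition}), after which Theorem~\ref{thm:composition_PRVs} gives additivity of PRVs for free. Your sketch is not wrong, but the paper's abstraction makes this step essentially immediate.
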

If each $M_i$ satisfies $\lp\frac{1}{\sqrt{k}},\frac{o(1)}{k}\rp$-DP, then by advanced composition (Proposition~\ref{prop:advanced_composition}), we can set $\eps_\upper=O(1)$. Therefore the running time of our algorithm in this case is $\tilde{O}\lp \frac{k^{1.5}\sqrt{\log \frac{1}{\delta_\error}}}{\eps_\error}\rp.$
We can save a factor of $k$, when we compose the same algorithm with itself $k$ times.
\begin{theorem}
	Suppose $M$ is a DP algorithm. Then the privacy curve $\delta_{M^{\circ k}}(\eps)$ of $M$ (adaptively) composed with itself $k$ times can be approximated in time
	\begin{equation}
		\label{eqn:time_ours_single_alg}
		O\lp\frac{\eps_\upper \ k^{\frac{1}{2}}\log{k}\sqrt{\log \frac{1}{\delta_\error}}}{\eps_\error}\rp,
	\end{equation}
	where $\eps_\error$ is the additive error in $\eps$, $\delta_\error$ is the additive error in $\delta$ and $\eps_\upper$ is an upper bound on $\max\left\{\eps_{M^{\circ k}}(\delta_\error), \eps_M\lp\frac{\delta_\error}{k}\rp\right\}.$
\end{theorem}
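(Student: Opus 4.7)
The plan is to reuse the PRV framework developed for Theorem~\ref{thm:informal_main} (composition of distinct algorithms) and exploit the fact that all $k$ PRVs are identical by replacing sequential convolution with repeated squaring. Let $Y$ be the PRV of $M$; then $Y^{\circ k} = Y_1 + \dots + Y_k$ (with $Y_i$ i.i.d.\ copies of $Y$) is the PRV of $M^{\circ k}$, and $\delta_{M^{\circ k}}(\eps)$ is read off from the distribution of $Y^{\circ k}$ in the usual way.

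First, I would import the truncation and discretization machinery from the general case: truncate $Y$ to an interval $[-L,L]$ with $L = O(\sqrt{k}\,\eps_\upper + \sqrt{\log(1/\delta_\error)})$ so that the tail mass contributed by any single $Y_i$ (and hence, via a union bound / concentration argument, by $Y^{\circ k}$) affects $\delta$ by at most $\delta_\error/k$; then place this interval on a mesh of width $h = O(\eps_\error/\sqrt{k\log(1/\delta_\error)})$. This yields a discrete approximation $\widetilde Y$ supported on $N = O(L/h) = \tilde{O}\bigl(\eps_\upper \sqrt{k}\sqrt{\log(1/\delta_\error)}/\eps_\error\bigr)$ points whose $k$-fold convolution approximates the law of $Y^{\circ k}$ with total error at most $\delta_\error$ in $\delta$ and $\eps_\error$ in $\eps$.

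Next, instead of forming $\widetilde Y^{*k}$ by $k-1$ successive convolutions as in the proof of Theorem~\ref{thm:informal_main}, I would compute it by exponentiation by squaring: expand $k = \sum_{j} 2^{j}$ in binary and build the $2^j$-fold self-convolutions $\widetilde Y^{*2^{j}}$ iteratively, combining the required ones at the end. This performs only $O(\log k)$ FFT-based convolutions. Each convolution is on arrays of length $O(N)$ (the support grows linearly but we can still truncate back to an interval of size $L$ at each stage), costing $O(N \log N)$. The total running time is therefore $O(N \log k \log N) = O\bigl(\eps_\upper\, k^{1/2}\log k \sqrt{\log(1/\delta_\error)}/\eps_\error\bigr)$, matching the stated bound.

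The main obstacle is error control under repeated squaring. With sequential convolutions the discretization error is essentially additive across $k$ steps, which is standard; with squaring, early-level errors are themselves convolved and could in principle accumulate multiplicatively. I would address this by (i) measuring error in an $L^1$-type metric (or the shifted ``$\delta$-distance'' tailored to DP guarantees) that is contractive under convolution with a probability distribution, so self-convolving a distribution with error $\eta$ still yields error $O(\eta)$ per level, and (ii) re-truncating after each squaring using the concentration of $Y^{\circ 2^j}$ — which gives tails of order $\sqrt{2^j}\,\eps_\upper + \sqrt{\log(k/\delta_\error)}$ by Hoeffding/Bernstein on PRVs — so the support stays $O(L)$ throughout. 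Summing $O(\log k)$ error contributions then loses only a $\log k$ factor, which has already been absorbed into the final bound. Combining these pieces with Theorem~\ref{thm:approximation_eps_upper_bound}'s conversion from PRV approximation to $(\eps,\delta)$ approximation yields the claim.
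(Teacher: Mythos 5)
Your proposal has two genuine problems.

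First, your choice $L = O(\sqrt{k}\,\eps_\upper + \sqrt{\log(1/\delta_\error)})$ is a factor $\sqrt{k}$ too large, and it is inconsistent with your claimed $N$: with $h = O(\eps_\error/\sqrt{k\log(1/\delta_\error)})$, dividing your $L$ by $h$ gives $N = O(k\,\eps_\upper\sqrt{\log(1/\delta_\error)}/\eps_\error)$, not the $\tilde{O}(\eps_\upper\sqrt{k}\sqrt{\log(1/\delta_\error)}/\eps_\error)$ you write down. You appear to be applying a naive Hoeffding tail ($k$ bounded increments of size $\eps_\upper$ each), but $\eps_\upper$ in the theorem already bounds $\eps_{M^{\circ k}}(\delta_\error)$, the $\eps$ of the \emph{composite} mechanism. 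The paper's crucial ingredient here is Lemma~\ref{lem:PRVs_tailbounds}: for any $(\eps,\delta)$-DP algorithm the PRV satisfies $\Pr[|Y|\ge \eps+2]=O(\delta)$. Applied to $M^{\circ k}$ (which is $(\eps_\upper,\delta_\error)$-DP) this gives $L = O(\eps_\upper)$, not $O(\sqrt{k}\,\eps_\upper)$. Without this refined tail bound, the running time would be $\tilde{O}(k)$, not $\tilde{O}(\sqrt{k})$, and the theorem would not be established.

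Second, the repeated-squaring scheme is both unnecessary and slightly lossy. The paper discretizes $Y$ once to an array of size $N=O(L/h)$, computes a single forward FFT of that array, raises the transform to the $k$-th power pointwise in $O(N)$ time, and performs a single inverse FFT; total cost $O(N\log N)$, which matches the bound because $\log N = O(\log k + \log\log(1/\delta_\error)+\log(\eps_\upper/\eps_\error))$. Your $O(\log k)$ rounds of FFT-squaring cost $O(N\log N\log k)$, an extra $\log k$ over the stated bound, and force you to grapple with exactly the re-truncation/error-accumulation issues you flag in (i)–(ii). The paper's circular convolution $\oplus_L$ together with Lemma~\ref{lem:convolution_wrapping_around_error} sidesteps all of this: no intermediate re-truncations are needed, the Hoeffding cancellation (Lemma~\ref{lem:coupling_sum_independent2}) is applied once globally to $\sum_i(Y_i^L - \tY_i)$, and the whole error analysis is Theorem~\ref{thm:approximation_eps_upper_bound} verbatim with $b=1$. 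Your heuristic (i) about an ``$L^1$-type metric contractive under convolution'' also does not capture the mechanism that gives the $\sqrt{k}$ mesh savings — what matters is that the per-coordinate discretization errors are independent and mean-zero (because \textsf{DiscretizePRV} matches means), so they concentrate at scale $\sqrt{k}\,h$; contraction arguments would not recover this.
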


Thus we improve the state-of-the-art by at least a factor of $k$ in running time. We also note that our algorithm improves the memory required by a factor of $k.$ See Figure~\ref{fig:ours_vs_Koskela} for a comparison of our algorithm with that of~\cite{KoskelaJPH21}. Also note that RDP Accountant (equivalent to the Moments Accountant) significantly overestimates the true $\eps$, while the GDP Accountant significantly underestimates the true $\eps.$ In contrast, the upper and lower bounds provided by our algorithm lie very close to each other.

\begin{figure}[!h]
    \begin{subfigure}[b]{0.45\textwidth}
        \resizebox{\linewidth}{!}{\large \input{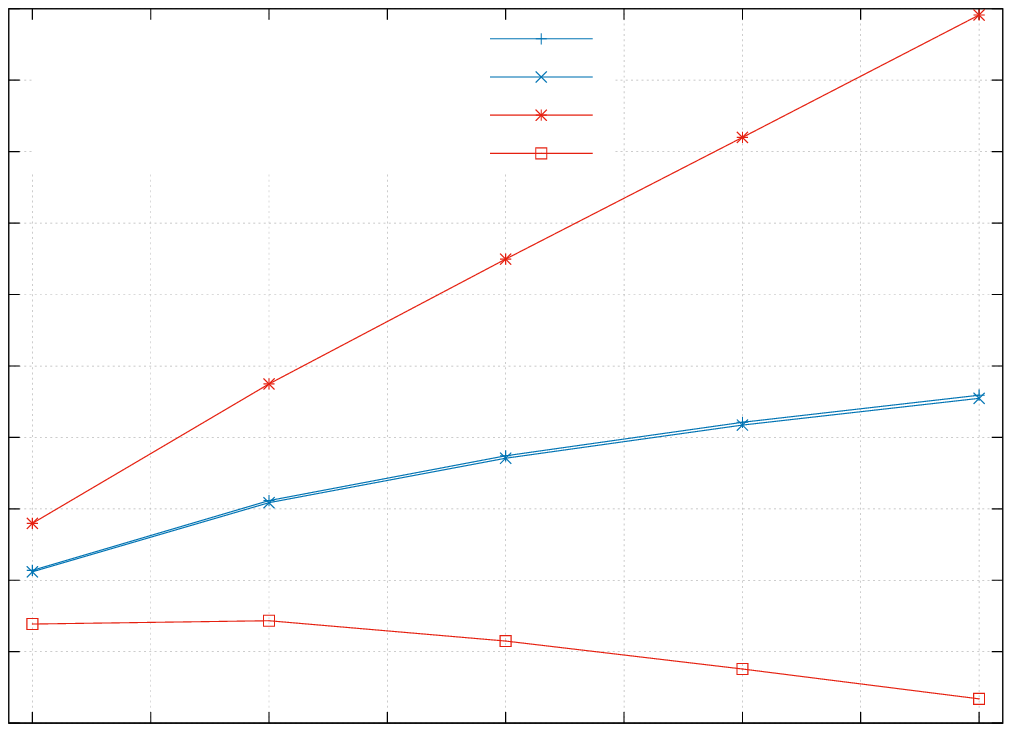}}
 \caption{Our algorithm gives much closer upper and lower bounds on the true privacy curve compared to~\cite{KoskelaJPH21}, under the same mesh size of $4\times 10^{-5}$. Our upper and lower bounds are nearly coinciding.}
    \end{subfigure}
    \qquad
    \begin{subfigure}[b]{0.45\textwidth}
        \resizebox{\linewidth}{!}{\large  \input{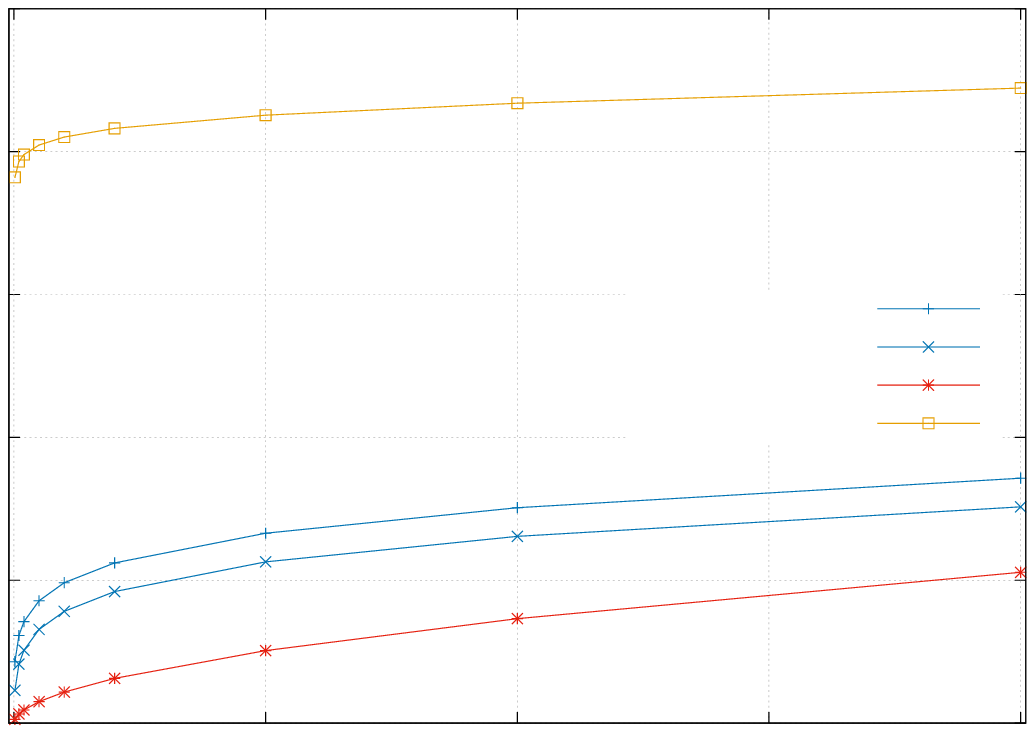}}
   
    \caption{Our algorithm can improve significantly over the RDP Accountant. We also see that GDP Accountant can significantly underreport the true $\eps$. We have set $\eps_\error=0.1$, $\delta_\error=\delta/1000$ here.}
    \end{subfigure}
    \caption{Case study on DP-SGD. Sampling probability $p=10^{-3}$, noise scale $\sigma=0.8$, $\delta=10^{-7}$.}
    \label{fig:ours_vs_Koskela}
\end{figure}

\paragraph{Our Techniques} Our algorithm (also the prior work of~\cite{koskela2020computing}) proceeds by approximating the privacy loss random variables (PRVs) by truncating and discretizing them. We then use Fast Fourier Transform (FFT) to convolve the distributions efficiently. The main difference is in the approximation procedure and the error analysis. In the approximation procedure, we correct the approximation so that the expected value of the discretization matches with the expected value of the PRV.

 To analyze the approximation error, we introduce the concept of \emph{coupling approximation} (Definition~\ref{def:couplingapx}), which is a variant of Wasserstein (optimal transport) distance specifically tailored to this application.
We first show that the approximation output by our algorithm to each privacy random variable is a good coupling approximation. We then show that when independent coupling approximations are added, cancellation happens between the errors due to Hoeffding bound, producing a much better coupling approximation than one naively expects from the triangle inequality. This allows us to choose the mesh size in our discretization to be $\approx \frac{1}{\sqrt{k}}$, whereas~\cite{koskela2021computing} choose a mesh size of $\approx \frac{1}{k}$. The other improvement is the truncation procedure. We give a tight tail bound of the PRVs (Lemma~\ref{lem:PRVs_tailbounds}). This allows us to choose the domain size for in truncation to be $\approx \tilde{O}(1)$, whereas~\cite{koskela2021computing} choose $\approx \tilde{O}(\sqrt{k})$. Both ideas together saves a factor of $k$ in the run time and memory. 

For the analysis, the previous paper analyzes the discretization error by studying the stability of convolution. This leads to complicated calculations with the runtime linear in $1/{\delta_\error}$ (see (\ref{eqn:time_Koskela})). Since $\delta_\error \ll \delta \ll 1/N$ is required to give meaningful privacy guarantee ($N$ is the number of users), this term $1/{\delta_\error}$ is huge. In this paper, we show various facts about how coupling approximation accumulates and use them to give a runtime depending only on $\sqrt{\log(1/{\delta_\error})}$.

\section{DP Preliminaries}
\label{sec:prelims}


Given a DP algorithm $\cM$, for each value of $\eps\ge 0$, there exists some $\delta \in [0,1]$ such that $\cM$ is $(\eps,\delta)$-DP. We can represent all these privacy guarantees by a function $\delta_\cM(\eps):\R^{\ge 0} \to [0,1]$ and say that $\delta_\cM(\cdot)$ is the \emph{privacy curve} of $\cM.$ This inspires the following definition of a privacy curve between two random variables.

\begin{definition}[Privacy curve]
	Given two random variables $X,Y$ supported on some set $\Omega$, define $\delta(X||Y): \R \to [0,1]$ as:
	$$\delta(X||Y)(\eps) = \sup_{S\subset \Omega} \Pr[Y\in S] - e^\eps \Pr[X\in S].$$
\end{definition}

Therefore an algorithm $\cM$ is $(\eps,\delta)$-DP iff $\delta\lp \cM(D)||\cM(D')\rp(\eps) \le \delta$ for all neighboring databases $D,D'.$ 

\begin{remark}
	Note that not all functions $\delta:\R \to [0,1]$ are privacy curves. A characterization of privacy curves can be obtained using the $f$-DP framework of~\cite{DongRS19}. The notion of privacy curve $\delta(X||Y)$ and tradeoff function $T(X||Y)$ are dual to each other via convex duality~\cite{DongRS19}. This implies a characterization of privacy curves as shown in \cite{zhu2021optimal}.
\end{remark}

\begin{definition}[Composition of privacy curves~\cite{DongRS19}]
	Let $\delta_1\equiv \delta(X_1||Y_1)$ and $\delta_2\equiv \delta(X_2||Y_2)$ be any two privacy curves. The composition of the privacy curves, denoted by $\delta_1 \otimes \delta_2$, is defined as $$\delta_1 \otimes \delta_2 \equiv \delta\lp(X_1,X_2)||(Y_1,Y_2)\rp$$ where $X_1,X_2$ are independently sampled and $Y_1,Y_2$ are independently sampled.
\end{definition}
Note that there can be many pairs of random variables which have the same privacy curve, but the above operation is well-defined. If $\delta(X_1||Y_1)\equiv \delta(X_1'||Y_1')$ and $\delta(X_2||Y_2)\equiv \delta(X_2'||Y_2')$, then it was shown by~\cite{DongRS19} that $$\delta\lp(X_1,X_2)||(Y_1,Y_2)\rp=\delta\lp(X_1',X_2')||(Y_1',Y_2')\rp.$$ \cite{DongRS19} also show that $\otimes$ is a commutative and associative operation.

Given two DP algorithms $M_1$ and $M_2$, the adaptive composition $(M_2 \circ M_1)(D)$ is an algorithm which outputs $(M_1(D),M_2(D,M_1(D))$, i.e., $M_2$ can look at the database $D$ and also the output of the previous algorithm $M_1(D)$. Adaptive composition of more than two algorithms is similarly defined. Suppose $M_1$ has privacy curve $\delta_1$ and $M_2$ has privacy curve $\delta_2$ (i.e., $M_2(\cdot, y)$ is a DP algorithm with privacy curve $\delta_2$ for any fixed $y.$). The following composition theorem shows how to get the privacy curve of $M_2 \circ M_1.$
\begin{theorem}[Composition theorem~\cite{DongRS19}]
\label{prop:composition}
	Let $M_1,M_2,\dots,M_k$ be DP algorithms with privacy curves given by $\delta_1,\delta_2,\dots,\delta_k$ respectively. The privacy curve of the adaptive composition $M_k\circ M_{k-1} \circ \dots \circ M_1$ is given by $\delta_1\otimes \delta_2 \otimes \cdots \otimes \delta_k.$
\end{theorem}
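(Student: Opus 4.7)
The plan is to induct on $k$. Since $\otimes$ was shown to be associative, it suffices to establish the $k=2$ case: then applying the $k=2$ result to the mechanism $M_{k-1} \circ \cdots \circ M_1$, whose privacy curve is $\delta_1 \otimes \cdots \otimes \delta_{k-1}$ by the inductive hypothesis, and the mechanism $M_k$ yields $\delta_1 \otimes \cdots \otimes \delta_k$.

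For $k=2$, fix neighboring databases $D, D'$ and write the joint density of the output of $M_2 \circ M_1(D) = (M_1(D), M_2(D, M_1(D)))$ via the chain rule as $p(y_1, y_2) = p_1(y_1)\, p_{2\mid 1}(y_2 \mid y_1)$, where $p_1$ is the density of $M_1(D)$ and $p_{2\mid 1}(\cdot \mid y_1)$ is the density of $M_2(D, y_1)$; define $p'$ analogously for $D'$. The hypotheses translate into a marginal bound $\delta(p_1 \| p_1')(\eps) \le \delta_1(\eps)$ and, uniformly over $y_1$, a conditional bound $\delta(p_{2\mid 1}(\cdot \mid y_1) \| p_{2\mid 1}'(\cdot \mid y_1))(\eps) \le \delta_2(\eps)$; the uniformity is automatic because $\delta_2$ is a worst-case bound over all neighboring pairs and all fixed second arguments.

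My approach is to exhibit a coupling and invoke data-processing. Using the integral representation $\delta(X \| Y)(\eps) = \int (q(z) - e^\eps p(z))_+\, dz$, for each $y_1$ the conditional bound lets me construct a measurable Markov kernel $\phi_{y_1}$ that simultaneously pushes a fixed canonical pair $(X_2, Y_2)$ realizing $\delta_2$ into $(p_{2\mid 1}(\cdot \mid y_1), p_{2\mid 1}'(\cdot \mid y_1))$; this is the ``dominating-pair'' construction, essentially Blackwell's sufficiency theorem applied to pairs comparable in all hockey-stick divergences. Similarly couple $(p_1, p_1')$ to a canonical pair $(X_1, Y_1)$ realizing $\delta_1$. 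Concatenating gives a joint post-processing of the independent product $(X_1, X_2)$ onto $M_2 \circ M_1(D)$ and of $(Y_1, Y_2)$ onto $M_2 \circ M_1(D')$; by the data-processing inequality, $\delta(M_2 \circ M_1(D) \| M_2 \circ M_1(D'))(\eps) \le \delta((X_1, X_2) \| (Y_1, Y_2))(\eps) = (\delta_1 \otimes \delta_2)(\eps)$.

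The main obstacle I anticipate is the measurable-selection step: the kernel $\phi_{y_1}$ must depend measurably on $y_1$, and verifying that the resulting adaptively chosen post-processing is still a legitimate post-processing of a single fixed independent sample requires a careful disintegration argument. A cleaner, purely analytic alternative is to split $(p_1'(y_1)\, p_{2\mid 1}'(y_2 \mid y_1) - e^\eps p_1(y_1)\, p_{2\mid 1}(y_2 \mid y_1))_+$ by inserting an auxiliary parameter $\alpha$, applying the subadditivity $(a+b)_+ \le a_+ + b_+$, integrating first in $y_2$ using the conditional $\delta_2$ bound and then in $y_1$ using the marginal $\delta_1$ bound, and finally optimizing over $\alpha$; matching the resulting expression to the integral form of $(\delta_1 \otimes \delta_2)(\eps)$ closes the argument. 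The reverse inequality, needed for equality rather than merely an upper bound, follows by exhibiting a specific pair of mechanisms whose output distributions on some $(D, D')$ literally realize the canonical pairs $(X_i, Y_i)$.
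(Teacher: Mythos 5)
The paper does not prove this statement --- it is cited from Dong, Roth, and Su, and the paper only establishes the PRV-level counterpart (Theorem~\ref{thm:composition_PRVs}) --- so there is no in-paper proof to compare against. Your first route, a measurable family of Markov kernels $\phi_{y_1}$ obtained from Blackwell/dominating-pair comparison followed by data processing, is the correct idea and is essentially how the $f$-DP literature handles adaptive composition. However, the measurable-selection issue you flag is the crux rather than a footnote: you identify the obstacle without closing it, and assembling the adaptively chosen kernels into a single legitimate post-processing of one fixed independent draw of $(X_1,X_2)$ (resp.\ $(Y_1,Y_2)$) is exactly what the cited proof has to work to establish.

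The ``cleaner, purely analytic alternative'' does not, as described, prove the theorem. With a single fixed splitting parameter $\alpha$, subadditivity plus the conditional $\delta_2$ bound (integrating in $y_2$) and then the marginal $\delta_1$ bound (integrating in $y_1$) gives
\begin{equation*}
\delta\lp M_2\circ M_1(D) \,||\, M_2\circ M_1(D')\rp(\eps) \ \le\ \delta_2(\alpha) + e^{\alpha}\,\delta_1(\eps-\alpha),
\end{equation*}
and minimizing over constant $\alpha$ yields a Dwork--Rothblum--Vadhan-style bound, which is strictly weaker than $(\delta_1\otimes\delta_2)(\eps)$ in general. For instance, with $\delta_1=\delta_2$ the tight curve of an $(\eps_0,0)$-DP mechanism and $\eps=0$, the tensor equals $\frac{e^{\eps_0}-1}{e^{\eps_0}+1}$ while $\inf_\alpha\lp\delta_2(\alpha)+e^{\alpha}\delta_1(-\alpha)\rp = 1-e^{-\eps_0}$, which is strictly larger, so this route cannot recover the theorem. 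The split can be salvaged, but only by letting $\alpha$ depend on $y_1$: setting $\alpha(y_1)=\eps-\log\lp p_1'(y_1)/p_1(y_1)\rp$ kills the second term pointwise, and the first term then integrates to $\E_{Y_1}\lb\delta_2(\eps-Y_1)\rb=\E_{Y_1,Y_2}\lb\lp 1-e^{\eps-Y_1-Y_2}\rp_+\rb=(\delta_1\otimes\delta_2)(\eps)$ by Theorems~\ref{thm:PRV_privacycurve} and~\ref{thm:composition_PRVs} --- at which point the ``marginal $\delta_1$ bound'' you invoke is never used, so the phrase ``optimizing over $\alpha$'' in your sketch is misleading. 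Finally, the paper's notion of a privacy curve of an algorithm is merely an achievable upper bound, so only the $\le$ direction is required; the reverse inequality you sketch at the end is not needed for the statement as written.
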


\section{Privacy Loss Random Variables (PRVs)}
\label{sec:PRV}

The notion of \emph{privacy loss random variables} (PRVs) is a unique way to assign a pair $(X,Y)$ for any privacy curve $\delta$ such that $\delta \equiv \delta(X||Y).$ PRVs allow us to compute composition of two algorithms via summing random variables (Theorem~\ref{thm:composition_PRVs}) (equivalently, convolving their distributions). Thus PRVs can be thought of as a \emph{reparametrization of privacy curves} where composition becomes convolution. In this paper, we differ from the usual definition of PRVs given in~\cite{DR16,koskela2020computing}, which are tied to a specific algorithm. Instead we think of them as a reparametrization of privacy curves and study them directly. This allows us to succinctly prove many useful properties of PRVs.

Let $\bR=\R\cup\set{-\infty,\infty}$ be the extended real line where we define $\infty+x=\infty$ and $-\infty+x=-\infty$ for $x\in \R.$

\begin{definition}[Privacy loss random variables (PRVs)]
	Given a privacy curve $\delta:\R\to[0,1]$, we say that $(X,Y)$ are \emph{privacy loss random variables} for $
	\delta$, if they satisfy the following conditions:
	\begin{itemize}
	 	\item $X,Y$ are supported on $\bR$,
	 	\item $\delta(X||Y)\equiv \delta$,
	 	\item $Y(t)=e^t X(t)$ for every $t\in \R$ and
	 	\item $Y(-\infty)=0$ and $X(\infty)=0$
	 \end{itemize} 
	 where $X(t),Y(t)$ are probability density functions of $X,Y$ respectively.
\end{definition}
Mathematically, the correct way to write the condition $Y(t)=e^t X(t)$ is to say that $\E_{Y}[\phi(Y)] = \E_{X}[\phi(X)e^X]$ for all test functions $\phi:\bR \to [0,1]$ with $\phi(\infty)=\phi(-\infty)=0.$ This will generalize to all situations where $X,Y$ are continuous or discrete or both. 
For ease of exposition, we ignore this complication and assume that $X(t),Y(t)$ represent the PDFs if $X,Y$ are continuous at $t$, or the probability masses if they have point masses at $t$.

The following theorem shows that the PRVs for a privacy curve $\delta=\delta(P||Q)$ are given by the log-likelihood random variables of $P,Q.$ 
\begin{restatable}{theorem}{PRVLLRV}
\label{thm:PRV_LLRV}
 Let $\delta:\R \to [0,1]$ be a privacy curve given by $\delta \equiv \delta(P||Q)$ where $P,Q$ are two random variables supported on $\Omega$. The PRVs $(X,Y)$ for the privacy curve $\delta$ are given by\footnote{Here $Q(\omega)$ and $P(\omega)$ are the probability density functions of $Q,P$ respectively. Note that the mathematically precise way is to replace the ratio $\frac{Q(\omega)}{P(\omega)}$ by the Radon-Nikodym derivative $\frac{dQ}{dP}(\omega).$}:
 $$X = \log\lp \frac{Q(\omega)}{P(\omega)}\rp \text{ where } \omega \sim P,$$
 $$Y = \log\lp \frac{Q(\omega)}{P(\omega)}\rp \text{ where } \omega \sim Q.$$
\end{restatable}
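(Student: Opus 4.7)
The plan is to verify the four defining properties of a PRV in turn for the candidate pair $X = \log(Q(\omega)/P(\omega))$ with $\omega\sim P$ and $Y=\log(Q(\omega)/P(\omega))$ with $\omega\sim Q$. Write $f(\omega) = \log\frac{dQ}{dP}(\omega)$ on the absolutely continuous part of $Q$ with respect to $P$, and separately account for the singular parts: points where $P(\omega)=0$ but $Q(\omega)>0$ contribute an atom for $Y$ at $+\infty$, and points where $Q(\omega)=0$ but $P(\omega)>0$ contribute an atom for $X$ at $-\infty$. This is the only measure-theoretic subtlety in the proof and I would handle it cleanly up front via a Lebesgue decomposition.

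\textbf{Support and boundary conditions.} Under $\omega\sim P$ one has $P(\omega)>0$ almost surely, so $X=f(\omega)\in[-\infty,\infty)$ and in particular $X(\infty)=0$. Symmetrically, under $\omega\sim Q$ one has $Q(\omega)>0$ a.s., so $Y\in(-\infty,\infty]$ and $Y(-\infty)=0$. This takes care of the first and fourth bullets in the definition.

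\textbf{The tilt relation $Y(t)=e^{t}X(t)$.} For any test function $\phi\colon \bar{\R}\to[0,1]$ with $\phi(\pm\infty)=0$, change measure from $Q$ to $P$:
\[
\E[\phi(Y)] \;=\; \int_{\Omega}\phi(f(\omega))\,dQ(\omega) \;=\; \int_{\Omega}\phi(f(\omega))\,e^{f(\omega)}\,dP(\omega) \;=\; \E\!\left[\phi(X)\,e^{X}\right].
\]
The vanishing of $\phi$ at $\pm\infty$ absorbs the singular contributions, and the identity above is exactly the distributional form of $Y(t)=e^{t}X(t)$ that the paper endorses.

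\textbf{Matching privacy curves.} The Neyman–Pearson style fact that the supremum defining $\delta(P\|Q)(\eps)$ is attained at the likelihood-ratio level set $S^{*}_{\eps}=\{\omega : Q(\omega) > e^{\eps}P(\omega)\} = \{\omega : f(\omega) > \eps\}$ gives
\[
\delta(P\|Q)(\eps) \;=\; \Pr_{Q}[f(\omega)>\eps] - e^{\eps}\Pr_{P}[f(\omega)>\eps] \;=\; \Pr[Y>\eps] - e^{\eps}\Pr[X>\eps].
\]
Applying the same argument directly to the pair $(X,Y)$ and invoking the tilt relation from the previous step, the optimal rejection region for $\delta(X\|Y)(\eps)$ is $T^{*}_{\eps}=\{t : Y(t)>e^{\eps}X(t)\}=\{t : t>\eps\}$, which yields the same numerical value. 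Hence $\delta(X\|Y)\equiv\delta(P\|Q)\equiv\delta$, completing the verification.

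The main obstacle, such as it is, is bookkeeping for the singular parts (the atoms at $\pm\infty$): one must make sure the change-of-variables in the tilt identity handles them correctly and that the Neyman–Pearson optimal set's boundary (the set $\{f=\eps\}$) is treated consistently on both sides when taking the supremum. Once $f$ is defined via Radon–Nikodym on the absolutely continuous part and the singular parts are recorded as atoms at $\pm\infty$, each of the three steps above reduces to a one-line computation.
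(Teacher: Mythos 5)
Your proof is correct and follows essentially the same route as the paper's: you verify the tilt relation $Y(t)=e^{t}X(t)$ by the same change-of-measure computation, and you show $\delta(X\|Y)\equiv\delta(P\|Q)$ by the same observation that the supremum in both cases is attained on the likelihood-ratio level set $\{f>\eps\}$, reducing both to $\Pr[Y>\eps]-e^{\eps}\Pr[X>\eps]$. The only difference is that you explicitly check the support and boundary conditions ($X(\infty)=0$, $Y(-\infty)=0$) and flag the Lebesgue-decomposition bookkeeping for the singular parts, which the paper leaves implicit; this is a minor tightening of the same argument rather than a different approach.
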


The following theorem provides a formula for computing the privacy curve $\delta$ in terms of the PRVs and conversely a formula for PRVs in terms of the privacy curve. A similar statement appears in~\cite{sommer2019privacy,koskela2020computing}.

\begin{restatable}{theorem}{PRVprivacycurve}
\label{thm:PRV_privacycurve}
 The privacy curve $\delta$ can be expressed in terms of PRVs $(X,Y)$ as:
 \begin{equation}
 \label{eqn:delta_PRV_simple}
 \delta(\eps)=\Pr[Y> \eps] - e^\eps \Pr[X > \eps]=\E_Y[(1-e^{\eps-Y})_+] =\Pr[Y\ge \eps+Z].
 \end{equation}
 where $Z$ is an exponential random variable.\footnote{For $x\in \R, x_+=\max\{x,0\}$.} Conversely, given a privacy curve $\delta:\R\to [0,1]$, we can compute the PDFs of its PRVs $(X,Y)$ as:
 \begin{equation}
 \label{eqn:PRV_from_delta}
 Y(t) = \delta''(t)-\delta'(t) \text{ and } X(t)=e^t(\delta''(t)-\delta'(t)).
 \end{equation}
\end{restatable}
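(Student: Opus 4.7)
The plan is to prove the two directions separately: first derive the three formulas for $\delta$ in terms of the PRVs $(X,Y)$, and then invert them to recover $(X,Y)$ from $\delta$ by differentiation.

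For the forward direction, I would start by invoking Theorem~\ref{thm:PRV_LLRV} to realize $(X,Y)$ as log-likelihood ratios for some concrete pair $(P,Q)$ with $\delta \equiv \delta(P\|Q)$. Unfolding the definition gives $\delta(\eps) = \sup_S \Pr_Q[\omega\in S] - e^\eps \Pr_P[\omega\in S]$, and a standard Neyman--Pearson argument shows that the supremum is attained on $S^\star = \{\omega : Q(\omega)/P(\omega) > e^\eps\}$. Translating back through the change of variable $t = \log(Q(\omega)/P(\omega))$ yields the first equality $\delta(\eps) = \Pr[Y>\eps] - e^\eps \Pr[X>\eps]$. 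Next, I would use the defining relation $Y(t) = e^t X(t)$ to substitute $e^\eps \Pr[X>\eps] = \int_\eps^\infty e^{\eps-t} Y(t)\,dt$, so that
\begin{equation*}
\delta(\eps) = \int_\eps^\infty Y(t)\lp 1 - e^{\eps-t}\rp dt = \E_Y\lb (1-e^{\eps-Y})_+\rb.
\end{equation*}
The third form then comes from the identity $(1-e^{\eps-y})_+ = \Pr[Z \le y-\eps]$ for $Z\sim\mathrm{Exp}(1)$ together with $y\ge\eps$; conditioning on $Y$ and taking expectation turns this into $\delta(\eps) = \Pr[Y \ge \eps + Z]$.

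For the converse, the idea is simply to differentiate $\delta(\eps) = \int_\eps^\infty Y(t)(1-e^{\eps-t})\,dt$ twice in $\eps$ and solve for $Y$. At $t=\eps$ the integrand vanishes, so the Leibniz boundary term contributes nothing at the first step and I obtain $\delta'(\eps) = -\int_\eps^\infty Y(t) e^{\eps-t}\,dt = -e^\eps\Pr[X>\eps]$. Differentiating once more, the boundary at $t=\eps$ now gives $Y(\eps)$ while the inner derivative reproduces $\delta'(\eps)$, yielding $\delta''(\eps) = Y(\eps) + \delta'(\eps)$. Rearranging gives $Y(\eps) = \delta''(\eps) - \delta'(\eps)$, and the formula for $X$ follows from the PRV identity $Y(\eps) = e^\eps X(\eps)$.

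The main obstacle will be handling the full generality of the PRV setup: $X,Y$ are supported on $\bR$ and may carry atoms, in particular at $\pm\infty$, so I would interpret the densities $X(t),Y(t)$ in the paper's convention (PDF where continuous, point mass where discrete) and verify that the integrals and the Leibniz differentiation are valid on the regular part, while the boundary atoms at $\pm\infty$ contribute $0$ to $\delta(\eps)$ for finite $\eps$ (since $X(\infty)=0$ and $Y(-\infty)=0$ by definition). A related subtlety is that the Neyman--Pearson supremum is strictly attained only when the likelihood ratio has no atom exactly at $e^\eps$; in general one has to pass through a limit of smoothed tests or verify the equality directly by computing both sides via the PRV integral, which is what the forward calculation effectively does.
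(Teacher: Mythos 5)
Your proposal is correct and tracks the paper's proof closely: both start from $\delta(\eps)=\Pr[Y\ge\eps]-e^\eps\Pr[X\ge\eps]$ (established in Theorem~\ref{thm:PRV_LLRV}), substitute $Y(t)=e^tX(t)$ to reach $\E_Y[(1-e^{\eps-Y})_+]$, and obtain the converse by differentiating twice and cancelling the boundary term. The one small divergence is in the third equality, where the paper uses integration by parts on $\int_\eps^\infty Y(t)(1-e^{\eps-t})\,dt$ to reach $\int_\eps^\infty e^{\eps-t}\Pr[Y\ge t]\,dt=\Pr[Y\ge\eps+Z]$, while you instead invoke the pointwise identity $(1-e^{\eps-y})_+=\Pr[Z\le y-\eps]$ and condition on $Y$ — a slightly slicker route to the same place. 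One nit: your worry about the Neyman--Pearson supremum when the likelihood ratio has an atom at $e^\eps$ is unnecessary, since on that set $Q(\omega)=e^\eps P(\omega)$ so including it changes $\Pr_Q[S]-e^\eps\Pr_P[S]$ by exactly zero; the set $S^\star=\{Q/P>e^\eps\}$ always attains the supremum.
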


\begin{remark}
	Theorem~\ref{thm:PRV_privacycurve} shows that the PRVs $X,Y$ do not depend on the particular $P,Q$ used to represent the privacy curve $\delta$ in Theorem~\ref{thm:PRV_LLRV}. So we should think of the PDF of of the PRV $Y$ (or $X$) as an equivalent reparametrization of the privacy curve $\delta:\R \to [0,1]$, just as the notion of $f$-DP~\cite{DongRS19} is a reparametrization of the privacy curve $\delta$. 
\end{remark}

PRVs are useful in computing privacy curves because the composition of two privacy curves can be computed by adding the corresponding pairs of PRVs. A similar statement appears in~\cite{DR16}.
\begin{restatable}{theorem}{compositePRV}
\label{thm:composition_PRVs}
Let $\delta_1,\delta_2$ be two privacy curves with PRVs $(X_1,Y_1)$ and $(X_2,Y_2)$ respectively. Then the PRVs for $\delta_1 \otimes \delta_2=\delta(X_1,X_2||Y_1,Y_2)$ are given by $(X_1+X_2,Y_1+Y_2)$. In particular, $$\delta_1\otimes \delta_2 = \delta(X_1+X_2||Y_1+Y_2).$$
\end{restatable}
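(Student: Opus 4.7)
The plan is to reduce the statement directly to Theorem~\ref{thm:PRV_LLRV} by exploiting the fact (established in Theorem~\ref{thm:PRV_privacycurve}) that the PRVs depend only on the privacy curve itself, not on the particular pair of distributions used to represent it. So I am free to choose convenient representatives and compute the log-likelihood ratios.

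\medskip

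\textbf{Step 1: Pick convenient representatives.} Write $\delta_1 \equiv \delta(P_1 \| Q_1)$ and $\delta_2 \equiv \delta(P_2 \| Q_2)$ for some random variables $P_i, Q_i$ on spaces $\Omega_i$. (For concreteness one may take $P_i, Q_i$ to be the PRVs themselves, but any representation works.) By the definition of composition,
$$
\delta_1 \otimes \delta_2 \;\equiv\; \delta\!\bigl((P_1, P_2)\,\big\|\,(Q_1, Q_2)\bigr),
$$
where the two coordinates are independent in both joint distributions. This already proves the ``in particular'' clause once we identify the log-likelihood PRVs below.

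\medskip

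\textbf{Step 2: Factor the log-likelihood ratio.} On the product space $\Omega_1 \times \Omega_2$, independence of coordinates makes the joint density factor, so for $(\omega_1,\omega_2)$,
$$
\log\frac{Q_1(\omega_1)\,Q_2(\omega_2)}{P_1(\omega_1)\,P_2(\omega_2)}
\;=\; \log\frac{Q_1(\omega_1)}{P_1(\omega_1)} + \log\frac{Q_2(\omega_2)}{P_2(\omega_2)}.
$$
(Formally this is a statement about Radon--Nikodym derivatives on a product measure; $\pm\infty$ point masses arising from singular parts are handled by the convention $\infty + x = \infty$, $-\infty + x = -\infty$ recalled just before the PRV definition.)

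\medskip

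\textbf{Step 3: Apply Theorem~\ref{thm:PRV_LLRV}.} Applied to the pair $\bigl((P_1,P_2),(Q_1,Q_2)\bigr)$, the theorem says the $Y$-PRV of $\delta_1 \otimes \delta_2$ is the above log-ratio evaluated at $(\omega_1,\omega_2) \sim (Q_1,Q_2)$, and the $X$-PRV is the same log-ratio evaluated at $(\omega_1,\omega_2) \sim (P_1, P_2)$. By Step 2 these are sums of two independent terms, and by Theorem~\ref{thm:PRV_LLRV} again each term is distributed as $Y_i$ (respectively $X_i$). Independence of the summands comes from independence of the coordinates in the joint distributions. Therefore the PRVs of $\delta_1 \otimes \delta_2$ are distributed as $(X_1 + X_2,\; Y_1 + Y_2)$ with $X_1 \perp X_2$ and $Y_1 \perp Y_2$, as claimed.

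\medskip

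\textbf{Main obstacle.} The only nontrivial point is the measure-theoretic care needed for the $\pm\infty$ atoms: the PRVs may place mass at $\pm\infty$ when one distribution is not absolutely continuous with respect to the other, and in Step 2 we need to ensure the sum decomposition still makes sense on these atoms. This is handled by the convention on $\bR$ already fixed in the PRV definition, plus the usual Radon--Nikodym decomposition of a product measure, and does not affect the argument.
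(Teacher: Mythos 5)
Your proof is correct and essentially the same as the paper's: both reduce to Theorem~\ref{thm:PRV_LLRV} by factoring the log-likelihood ratio on the product space and using independence of the coordinates. The only (cosmetic) difference is that the paper specializes the representatives to the PRVs $(X_i,Y_i)$ themselves, so the defining relation $Y_i(t)=e^tX_i(t)$ collapses the ratio directly to $e^{t_1+t_2}$, whereas you keep arbitrary representatives $(P_i,Q_i)$ and invoke Theorem~\ref{thm:PRV_LLRV} a second time to identify each summand as $X_i$ or $Y_i$.
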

\begin{proof}
Let $(X,Y)$ be the privacy random variables for $\delta(X_1,X_2||Y_1,Y_2)$. By Theorem~\ref{thm:PRV_LLRV},
\begin{align*}
	X & = \log\lp\frac{(Y_1,Y_2)(t_1,t_2)}{(X_1,X_2)(t_1,t_2)}\rp \text{ where } (t_1,t_2)\sim (X_1,X_2)\\
	& = \log\lp\frac{Y_1(t_1)Y_2(t_2)}{X_1(t_1)X_2(t_2)}\rp \text{ where } t_1\sim X_1,t_2\sim X_2\tag{By independence of $X_1,X_2$ and indpendence of $Y_1,Y_2$}\\
	& = \log\lp e^{t_1} \cdot e^{t_2}\rp \text{ where } t_1\sim X_1,t_2\sim X_2\\
	& = t_1+t_2 \text{ where } t_1\sim X_1,t_2\sim X_2\\
	& = X_1 + X_2.
\end{align*}
Similarly, \begin{align*}
	Y & = \log\lp\frac{(Y_1,Y_2)(t_1,t_2)}{(X_1,X_2)(t_1,t_2)}\rp \text{ where } (t_1,t_2)\sim (Y_1,Y_2)\\
	& = t_1+t_2 \text{ where } t_1\sim Y_1,t_2\sim Y_2\\
	& = Y_1 + Y_2.
\end{align*}
\end{proof}

In Appendix~\ref{sec:appendix_PRV}, we provide a proof of Theorems \ref{thm:PRV_LLRV} and \ref{thm:PRV_privacycurve}. We also discuss how to compute the PRVs for a subsampled mechanism given the PRVs for the original mechanism and give examples of PRVs for few standard mechanisms. These are used in our experiments to calculate the PRVs for DP-SGD.

\section{Numerical composition of privacy curves}
\label{sec:numerical_composition}

In this section, we present an efficient and numerically accurate method, \textsf{ComposePRV} (Algorithm~\ref{alg:PrivacyComposition}), for composing privacy guarantees by utilizing the notion of PRVs.

\begin{algorithm}[h]

\SetAlgoLined
\KwIn{$\CDF$s of PRVs $Y_1,Y_2,\dots,Y_k$, mesh size $h$, Truncation parameter $L\in \frac{h}{2}+h\Z^{> 0}$}
\KwOut{PDF of an approximation $\tY$ for $Y=\sum_{i=1}^k Y_i$. $\tY$ will be supported on $\mu+(h\Z \cap [-L,L])$ for some $\mu\in [0,\frac{h}{2}]$. 
}
\For{$\ell=1$ to $k$}{
	$\tY_i \leftarrow \mathsf{DiscretizePRV}(Y_i,L,h)$\;
}
 
Compute PDF Of $\tY = \tY_1 \oplus_L \tY_2 \oplus_L \cdots \oplus_L \tY_k$ by convolving PDFs of $\tY_1,\tY_2,\dots,\tY_k$ using FFT\;
Compute $\delta_{\tY}(\eps) = \E_{\tY}\lb \lp 1-e^{\eps-\tY}\rp_+\rb$ for all $\eps \in [0,L]$\;
Return $\tY,\delta_{\tY}(\cdot)$
 \caption{\textsf{ComposePRV}: Composing privacy curves using PRVs}
 \label{alg:PrivacyComposition}
\end{algorithm}

In the algorithm \textsf{ComposePRV}, we compute the circular convolution $\oplus_L$ using Fast Fourier Transform (FFT). Fix some $L>0$. For $x\in \R$, we define $x \pmod{2L}=x-2Ln$ where $n\in \Z$ is chosen such that $x-2Ln \in (-L,L].$ Given $x,y$, we define the circular addition $$x\oplus_L y = x+y \pmod{2L}.$$ When we use FFT to compute the convolution of two discrete distributions $Y_1,Y_2$ supported on $h\Z \cap [-L,L]$, we are implicitly calculating the the distribution of $Y_1 \oplus_L Y_2$. 
In the appendix, we show that $\tY_1 \oplus_L \tY_2 \oplus_L \dots \oplus_L \tY_k$ is a good approximation of $Y_1+Y_2+\dots+Y_k$.

The subroutine \textsf{DiscretizePRV} (Algorithm~\ref{alg:discretizePRV}) is used to truncate and discretize PRVs. In this subroutine, we shift the discretized random variables such that it has the same mean as the original variables. This is one of main differences between our algorithm and the algorithm in \cite{KoskelaJPH21, koskela2021computing}. We show that this significantly decreases the discretization error and allow us to use much coarser mesh $h \approx 1/\sqrt{k}$ instead of $h \approx 1/k$.

\begin{algorithm}[h]
\SetAlgoLined
\KwIn{$\CDF_Y(\cdot)$ of a PRV $Y$, mesh size $h$, Truncation parameter $L\in \frac{h}{2}+h\Z^{> 0}$}
\KwOut{PDF of an approximation $\tY$ supported on $\mu+(h\Z \cap [-L,L])$ for some $\mu \in [0,\frac{h}{2}].$}

 $n \leftarrow \frac{L- \frac{h}{2}}{h}$\;
 
 \For{$i=-n$ to $n$}{
	 $q_i \leftarrow \CDF_Y(ih+h/2)-\CDF_Y(ih-h/2)$\;
 }

 $q \leftarrow q/\lp \sum_{i=-n}^n q_i\rp$ \tcp*{Normalize $q$ to make it a probability distribution}

 $Y^L \leftarrow Y\big|_{|Y|\le L}$ (i.e., $Y$ conditioned on $|Y|\le L$)\;
 $\mu \leftarrow \E[Y^L]- \sum_{i=-n}^{n} ih \cdot q_i$\;
 $\tY \leftarrow \begin{cases}
 	ih+\mu &\text{ w.p. } q_i \text{ for } -n\le i\le n
 \end{cases}$\;
 Return $\tY$\;
 \caption{\textsf{DiscretizePRV}: Discretize and truncate a PRV}
 \label{alg:discretizePRV}
\end{algorithm}

For simplicity, throughout this paper, we will assume that the PRVs $Y_1,Y_2,\dots,Y_k$ do not have any mass at $\infty$. This is with out loss of generality. Suppose $\Pr[Y_i=\infty]=\delta_i$ for each $i.$ Let $Y_i'=Y_i|_{Y_i\ne \infty}$. Then 
\begin{align*}
	Y_1+Y_2+\dots+Y_k =
	\begin{cases}
		Y_1'+Y_2'+\dots+Y_k' & \text{ w.p. } (1-\delta_1)(1-\delta_2)\cdots(1-\delta_k)\\
		\infty & \text{ w.p. } 1-(1-\delta_1)(1-\delta_2)\cdots(1-\delta_k).
	\end{cases}
\end{align*}
Therefore we can use Algorithm~\ref{alg:PrivacyComposition} to approximate the distribution of $Y_1'+Y_2'+\dots+Y_k'$, and use it to approximate the distribution of $Y_1+Y_2+\dots+Y_k$.

\section{Error analysis}
\label{sec:error_analysis}
To analyze the discretization error, we introduce the notion of \emph{coupling approximation}, a variant of Wasserstein distance. Intuitively, a good coupling approximation is a coupling where the two random variables are close to each other with high probability.
\begin{definition}[coupling approximation]\label{def:couplingapx}
	Given two random variables $Y_1,Y_2$, we write $|Y_1-Y_2|\le_{\eta} h$ if there exists a coupling between $Y_1,Y_2$ such that $\Pr[|Y_1-Y_2|>h]\le \eta.$
\end{definition}
The following lemma shows that if we have a good coupling approximation $\tY$ to a PRV $Y$, then the privacy curves $\delta_Y(\eps)$ and $\delta_{\tY}(\eps)$ should be close. 
\begin{lemma}
	\label{lem:coupling_to_privacy_curves}
	If $Y$ and $\tY$ are two random variables such that $|Y-\tY|\le_\eta h$, then for every $\eps\in \R$, 
	$$\delta_{\tY}(\eps+h)-\eta \le \delta_{Y}(\eps) \le \delta_{\tY}(\eps-h)+\eta.$$
\end{lemma}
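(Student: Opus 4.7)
The plan is to express both privacy curves using the probabilistic identity $\delta_Y(\eps) = \Pr[Y \ge \eps + Z]$ with $Z$ an independent exponential random variable (the third formula in equation~(\ref{eqn:delta_PRV_simple}) of Theorem~\ref{thm:PRV_privacycurve}). This representation is particularly friendly to coupling arguments because it writes $\delta_Y(\eps)$ as the probability of a threshold event on $Y$ itself, so shifting $\eps$ by $\pm h$ is equivalent to shifting $Y$ by $\mp h$ inside the event.

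Concretely, I would start by fixing the good coupling guaranteed by the hypothesis $|Y - \tilde{Y}| \le_\eta h$: a joint distribution on $(Y, \tilde{Y})$ with the correct marginals and with $\Pr[|Y - \tilde{Y}| > h] \le \eta$. Then introduce an exponential random variable $Z$ that is independent of this joint distribution. By Theorem~\ref{thm:PRV_privacycurve}, $\delta_Y(\eps) = \Pr[Y \ge \eps + Z]$ and $\delta_{\tilde{Y}}(\eps') = \Pr[\tilde{Y} \ge \eps' + Z]$, where the probabilities are taken under the joint law of $(Y, \tilde{Y}, Z)$.

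For the upper bound on $\delta_Y(\eps)$, observe that on the good event $\{|Y - \tilde{Y}| \le h\}$, any outcome with $Y \ge \eps + Z$ satisfies $\tilde{Y} \ge Y - h \ge (\eps - h) + Z$. Hence
\[
\{Y \ge \eps + Z\} \subseteq \{\tilde{Y} \ge (\eps - h) + Z\} \cup \{|Y - \tilde{Y}| > h\},
\]
and taking probabilities yields $\delta_Y(\eps) \le \delta_{\tilde{Y}}(\eps - h) + \eta$. For the lower bound, the symmetric argument starts from an outcome with $\tilde{Y} \ge (\eps + h) + Z$ and deduces $Y \ge \tilde{Y} - h \ge \eps + Z$ on the good event, giving $\delta_{\tilde{Y}}(\eps + h) \le \delta_Y(\eps) + \eta$, i.e., $\delta_{\tilde{Y}}(\eps + h) - \eta \le \delta_Y(\eps)$. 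Combining the two directions is the statement of the lemma.

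There is no real obstacle here; the only delicate point is bookkeeping the joint law, namely ensuring that the coupling of $(Y, \tilde{Y})$ and the independent exponential $Z$ live on a common probability space so that both identities $\delta_Y(\eps) = \Pr[Y \ge \eps + Z]$ and $\delta_{\tilde{Y}}(\eps') = \Pr[\tilde{Y} \ge \eps' + Z]$ refer to the same underlying measure. Once that is set up, the whole argument is a one-line union bound on each side.
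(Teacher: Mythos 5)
Your proposal is correct and is essentially the same argument the paper gives: both start from the representation $\delta_Y(\eps)=\Pr[Y\ge\eps+Z]$ (from Theorem~\ref{thm:PRV_privacycurve}), couple $(Y,\tY)$, and apply a union bound over the bad event where $Y$ and $\tY$ differ by more than $h$. The only cosmetic difference is that you spell out the event inclusion $\{Y\ge\eps+Z\}\subseteq\{\tY\ge(\eps-h)+Z\}\cup\{|Y-\tY|>h\}$, while the paper writes it as a chain of inequalities via $\Pr[Y-\tY\ge h]+\Pr[\tY\ge\eps-h+Z]$; the underlying reasoning is identical.
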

\begin{proof} By Theorem \ref{thm:PRV_LLRV}, $\delta_Y(\eps)=\Pr[Y\ge \eps+Z]$ and hence
\begin{align*}
	\delta_Y(\eps) 
	&=\Pr[Y-\tY + \tY \ge \eps +Z]\\
	&\le\Pr[Y-\tY \ge h] + \Pr[\tY \ge \eps -h +Z]\\
	&\le\eta+ \delta_{\tY}(\eps-h).
\end{align*}
Similarly, we have $\delta_{\tY}(\eps) \le \eta + \delta_{Y}(\eps-h)$ for all $\eps\in \R.$ 
\end{proof}

Therefore the goal of our analysis is to show that the \textsf{ComposePRV} algorithm finds a good coupling approximation $\tY$ to $Y=\sum_{i=1}^k Y_i.$ We first show that the \textsf{DiscretizePRV} algorithm computes a good coupling approximation to the PRVs and crucially, it preserves the expected value after truncation. Lemma~\ref{lem:discretizePRV_coupling} shows that $|\tY-Y^L|\le_0 h$ where $\tY$ is the approximation of a PRV $Y$ output by Algorithm~\ref{alg:discretizePRV} and $Y^L$ is the truncation of $Y$ to $[-L,L].$

We then use the following key lemma which shows that when we add independent coupling approximations (where expected values match), we get a much better coupling approximation than what the triangle inequality predicts.
\begin{lemma}
	\label{lem:coupling_sum_independent}
	Suppose $Y_1,Y_2,\dots,Y_k$ and $\tY_1,\tY_2,\dots,\tY_k$ are two collections of independent random variables such that $|Y_i-\tY_i|\le_0 h$ and $\E[Y_i]=\E[\tY_i]$ for all $i$, then $$\left|\sum_{i=1}^k Y_i - \sum_{i=1}^k \tY_i\right|\le_{\eta}h\sqrt{2k\log{\frac{2}{\eta}}}.$$
\end{lemma}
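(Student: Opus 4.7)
The plan is to use Hoeffding's inequality on the differences $Z_i := Y_i - \tY_i$. Concretely, the hypothesis $|Y_i - \tY_i|\le_0 h$ gives us, for each $i$, a coupling on some probability space such that $|Y_i - \tY_i|\le h$ almost surely. I will first construct a single joint coupling of $(Y_1,\tY_1),\ldots,(Y_k,\tY_k)$ by taking these per-index couplings independently across $i$. Because the marginal law of $Y_i$ on this new space agrees with the original (and similarly for $\tY_i$), the collection $(Y_i)_i$ remains independent and so does $(\tY_i)_i$, matching the hypotheses of the lemma. On this joint space, the random variables $Z_i$ are independent.

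Next I observe two properties of the $Z_i$: (i) they are bounded, with $Z_i\in[-h,h]$ almost surely by construction, and (ii) they have mean zero, since $\E[Z_i]=\E[Y_i]-\E[\tY_i]=0$ by the hypothesis that the expectations match. This is exactly where the mean-matching in Algorithm~\ref{alg:discretizePRV} pays off: without it we would only get $|\E[Z_i]|\le h$, and summing such deterministic biases would produce a linear-in-$k$ drift, destroying the square-root savings.

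With independence, boundedness, and mean zero in hand, I apply Hoeffding's inequality to $\sum_{i=1}^k Z_i$. For $Z_i\in[-h,h]$ this gives
\[
\Pr\!\left[\,\left|\sum_{i=1}^k Z_i\right|\ge t\,\right]\ \le\ 2\exp\!\left(-\frac{t^2}{2kh^2}\right).
\]
Setting $t=h\sqrt{2k\log(2/\eta)}$ makes the right-hand side at most $\eta$. Translating back into coupling-approximation language, the joint coupling I constructed witnesses $\bigl|\sum_i Y_i-\sum_i \tY_i\bigr|\le h\sqrt{2k\log(2/\eta)}$ with probability at least $1-\eta$, which is exactly the claim.

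There is no real obstacle here beyond bookkeeping: the lemma is essentially a Hoeffding bound dressed up in the coupling formalism, and the only subtle point is making sure the constructed joint coupling preserves the independence of each marginal family so that Hoeffding actually applies. I would not write out the constants in Hoeffding's inequality; they are standard and give exactly the stated bound.
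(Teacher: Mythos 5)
Your proof is correct and follows essentially the same route as the paper: couple each pair $(Y_i,\tY_i)$ so the difference is bounded by $h$ almost surely, observe the differences are independent across $i$ (after taking the couplings independently) and mean-zero because $\E[Y_i]=\E[\tY_i]$, and apply Hoeffding to $\sum_i(Y_i-\tY_i)$ with $t=h\sqrt{2k\log(2/\eta)}$. The paper's proof is terser (it does not spell out the joint-coupling construction or flag the mean-zero step), but the substance is identical.
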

\begin{proof}
	Let $X_i=Y_i-\tY_i$ where $(Y_i,\tY_i)$ are coupled such that $|Y_i-\tY_i|\le h$ w.p. $1$. Then $X_i \in [-h,h]$ w.p. $1$. Note that $X_1,X_2,\dots,X_k$ are independent of each other. By Hoeffding's inequality, 
	$$\Pr\lb\left|\sum_i X_i\right| \ge t\rb \le 2\exp\lp- \frac{2t^2}{k(2h)^2}\rp=\eta$$ if we set $t=h\sqrt{2k\log{\frac{2}{\eta}}}$.
\end{proof}

This lemma shows that the error of $k$ times composition is around $\sqrt{k}\cdot h$ and hence setting $h \approx 1/\sqrt{k}$ gives small enough error. Next, we bound the domain size $L$. Naively, the domain size $L$ should be of the order of $\sqrt{k}$ because $Y$ is the sum of $k$ independent random variables with each bounded by a constant. In the appendix, we give a tighter tail bound of $Y$.
\begin{restatable}{lemma}{PRVtail}
	\label{lem:PRVs_tailbounds}
	Let $(X,Y)$ be the privacy random variables for a $(\epsilon, \delta)$-DP algorithm, then for any $t\ge 0$, we have
	$$\Pr[|Y|\ge \eps + t ] \le \frac{\delta \lp 1+e^{-\eps-t}\rp}{1-e^{-t}}.$$
\end{restatable}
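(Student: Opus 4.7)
The plan is to split $\Pr[|Y|\geq \eps+t] = \Pr[Y\geq \eps+t] + \Pr[Y\leq -(\eps+t)]$ and bound each tail separately against one of the two summands of the target: $\delta/(1-e^{-t})$ for the upper tail and $\delta e^{-\eps-t}/(1-e^{-t})$ for the lower tail. The key ingredients for both bounds are (i) the defining density relation $Y(s) = e^s X(s)$ of a PRV pair, and (ii) the fact that $(\eps,\delta)$-DP gives control over the privacy curve in \emph{both} directions between the neighboring databases, hence over both $\delta(P\|Q)(\eps)\le\delta$ and $\delta(Q\|P)(\eps)\le\delta$.

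For the upper tail, I would use the privacy curve with the test set $S = (\eps+t,\infty)$. From $\delta \ge \delta(X\|Y)(\eps) \ge \Pr[Y\in S] - e^\eps\Pr[X\in S]$ and the identity $Y(s) - e^\eps X(s) = (1 - e^{\eps-s})\,Y(s)$, the integrand is bounded below by $(1-e^{-t})\,Y(s)$ on the region $s \ge \eps+t$. Integrating gives $\delta \ge (1-e^{-t})\,\Pr[Y > \eps+t]$, yielding the first summand.

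For the lower tail, the plan is to invoke symmetry. Since the paper's definition of $(\eps,\delta)$-DP is symmetric in the ordered pair of neighboring databases, the reversed pair yields $\delta(Q\|P)(\eps)\le\delta$ as well. By Theorem~\ref{thm:PRV_LLRV}, the PRVs for $\delta(Q\|P)$ are exactly $(-Y,-X)$, so applying the upper-tail bound of the previous paragraph to this reversed pair produces $\Pr[-X > \eps+t] \le \delta/(1-e^{-t})$, that is, $\Pr[X < -(\eps+t)] \le \delta/(1-e^{-t})$. Converting back to $Y$ via the density relation, we have
\[
\Pr[Y < -(\eps+t)] \;=\; \int_{-\infty}^{-(\eps+t)} e^s X(s)\,ds \;\le\; e^{-(\eps+t)}\Pr[X < -(\eps+t)] \;\le\; \frac{\delta\, e^{-(\eps+t)}}{1-e^{-t}},
\]
where the first inequality uses $e^s \le e^{-(\eps+t)}$ on the integration range. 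Summing the two tail bounds gives the stated inequality.

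The only subtle point, and the one that deserves care, is the symmetry step in the lower-tail argument: one must verify that $(-Y,-X)$ truly satisfy the PRV conditions for $\delta(Q\|P)$ (the density condition $(-X)(s) = e^s(-Y)(s)$ is equivalent to $X(-s) = e^s Y(-s)$, which follows directly from $Y(-s) = e^{-s}X(-s)$). Alternatively, one can avoid the symmetry cast entirely by applying the privacy curve directly to $S=(-\infty,-(\eps+t))$, combining the resulting inequality $\Pr[X<-(\eps+t)] - e^\eps\Pr[Y<-(\eps+t)] \le \delta$ with the density-based estimate $\Pr[Y<-(\eps+t)]\le e^{-(\eps+t)}\Pr[X<-(\eps+t)]$ and solving for $\Pr[X<-(\eps+t)]$; either route produces the same factor of $(1-e^{-t})^{-1}$.
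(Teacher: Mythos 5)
Your proof is correct and follows essentially the same route as the paper. The upper tail is bounded via the privacy-curve formula and the density relation $Y(s)=e^sX(s)$; the lower tail is reduced to the upper tail applied to the reversed pair $(-Y,-X)$, whose validity you justify directly rather than citing Proposition~\ref{prop:finv_prv} (the paper's small helper fact) — but your direct verification and the invocation of the symmetric $(\eps,\delta)$-DP guarantee for the reversed databases is exactly what that proposition encapsulates. The only cosmetic difference is in the upper tail: you lower-bound the factor $(1-e^{\eps-s})$ by $(1-e^{-t})$ inside $\delta(\eps)=\int_{\eps}^{\infty}(1-e^{\eps-s})Y(s)\,ds$, while the paper lower-bounds the survival function $\Pr[Y\ge\eps+s]$ by $\Pr[Y\ge\eps+t]$ inside the equivalent form $\delta(\eps)=\int_0^\infty e^{-s}\Pr[Y\ge\eps+s]\,ds$; both yield the same factor.
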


This shows that $\Pr[|Y| \ge \eps + 2] \le \frac{4}{3} \delta$ and hence truncating the domain with $L = 2 + \eps$ only introduces an additive $\delta$ error in the privacy curve. Therefore, if the composition satisfies a good privacy guarantee (namely $\eps = O(1)$ for small enough $\delta$), we can truncate the domain at $L = \Theta(1)$. Together with the fact that mesh size is $1/\sqrt{k}$, this gives a $O(\sqrt{k})$-time algorithm for computing the privacy curve when we compose the same mechanism with itself $k$ times. The following theorem gives a formal statement of the error bounds of our algorithm, it is proved in Appendix~\ref{sec:error_analysis}.

\begin{restatable}{theorem}{mainthm}
	\label{thm:approximation_eps_upper_bound}
	Let $\eps_\error,\delta_\error>0$ be some fixed error terms.
	 Let $\cM_1,\cM_2,\dots,\cM_k$ be DP algorithms with privacy curves $\delta_{\cM_i}(\eps)$. Let $Y_i$ be the PRV corresponding to $\cM_i$ such that $\delta_{\cM_i}(\eps)=\delta_{Y_i}(\eps)$ for $\eps\ge 0$. Let $\cM$ be the (adaptive) composition of $\cM_1,\cM_2,\dots,\cM_k$ and let $\delta_{\cM}(\eps)$ be its privacy curve. 
	 Set $L\ge 2+\eps_\error$ sufficiently large such that 
	\begin{equation}
	\label{eqn:L_delta}
		 \sum_{i=1}^k \delta_{\cM_i}(L-2) \le \frac{\delta_\error}{8} \text{ and } \delta_{\cM}(L-2-\eps_\error) \le \frac{\delta_\error}{4}.
	\end{equation}
	Let $\tY$ be the approximation of $Y=\sum_{i=1}^k Y_i$ produced by \textsf{ComposePRV} algorithm with mesh size $$h=\frac{\eps_\error}{\sqrt{\frac{k}{2}\log \frac{12}{\delta_\error}}}.$$ Then 
		\begin{equation}
	\label{eqn:upperlower_delta}
	\delta_{\tY}(\eps+\eps_{\error}) -\delta_{\error}\le \delta_Y(\eps) = \delta_{\cM}(\eps) \le \delta_{\tY}(\eps-\eps_{\error})+\delta_\error.
	\end{equation}
Furthermore, our algorithm takes $O\lp b \frac{L}{h} \log\lp\frac{L}{h}\rp\rp$ time where $b$ is the number of distinct algorithms among $\cM_1,\cM_2,\dots,\cM_k$.	
\end{restatable}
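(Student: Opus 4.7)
The plan is to decompose the overall error into three sources---truncation to $[-L,L]$, discretization at mesh $h$, and circular wraparound in the FFT convolution---and combine them within the coupling-approximation framework developed in this section. I would start by setting $Y_i^L := Y_i \mid |Y_i| \le L$. Lemma~\ref{lem:PRVs_tailbounds} applied to each $\cM_i$ gives $\Pr[|Y_i| \ge L] \lesssim \delta_{\cM_i}(L-2)$ (the factor $(1+e^{-\eps-t})/(1-e^{-t})$ is bounded by a constant when $t=2$), so the hypothesis $\sum_i \delta_{\cM_i}(L-2) \le \delta_\error/8$ combined with a union bound produces a natural coupling under which $\sum_i Y_i = \sum_i Y_i^L$ except on an event of probability at most a constant fraction of $\delta_\error$.

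Next I would control the discretization. By construction of Algorithm~\ref{alg:discretizePRV}, the shift $\mu$ is chosen exactly so that $\E[\tY_i] = \E[Y_i^L]$, and the obvious bin-wise coupling between $Y_i^L$ and $\tY_i$ yields $|\tY_i - Y_i^L| \le h$ almost surely, i.e., $|\tY_i - Y_i^L| \le_0 h$ (this is the content of Lemma~\ref{lem:discretizePRV_coupling}). Because the $\tY_i$ and $Y_i^L$ are independent across $i$ with matching means, Lemma~\ref{lem:coupling_sum_independent} gives
\[
\Big|\sum_{i=1}^k \tY_i - \sum_{i=1}^k Y_i^L\Big| \le_{\eta}\ h\sqrt{2k\log(2/\eta)}
\]
for any $\eta>0$; choosing $\eta$ to be a suitable constant fraction of $\delta_\error$ and using the theorem's formula for $h$ makes the right-hand side at most $\eps_\error$.

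The last piece is passing from the ordinary sum $\sum_i \tY_i$ to the circular sum $\tY = \tY_1 \oplus_L \cdots \oplus_L \tY_k$ actually produced by the FFT. Whenever the ordinary sum lies in $(-L,L]$, the circular and ordinary sums coincide, and by the two previous coupling bounds this event is implied---up to already-accounted failure---by $|Y| \le L-\eps_\error$. The complementary tail is controlled by Lemma~\ref{lem:PRVs_tailbounds} applied to the composition $\cM$ together with the hypothesis $\delta_{\cM}(L-2-\eps_\error) \le \delta_\error/4$. Summing the three failure events yields $|Y-\tY| \le_{\delta_\error} \eps_\error$, and Lemma~\ref{lem:coupling_to_privacy_curves} immediately translates this into the two-sided bound (\ref{eqn:upperlower_delta}).

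Finally, the running time follows from standard FFT analysis: each discretized PRV is supported on $\Theta(L/h)$ grid points, one convolution costs $O((L/h)\log(L/h))$, and composing $k$ copies of a single mechanism reduces to repeated squaring so that essentially one FFT per distinct mechanism suffices, giving the claimed $O(b (L/h)\log(L/h))$ bound. The main obstacle I anticipate is the bookkeeping of constants so that the three failure probabilities (individual truncation, Hoeffding deviation, and composed tail) add to at most $\delta_\error$ while the cumulative shift in $\eps$ stays within $\eps_\error$; all of the conceptual content is already captured by the three key ingredients Lemma~\ref{lem:PRVs_tailbounds}, the mean-preserving construction in Algorithm~\ref{alg:discretizePRV}, and Lemma~\ref{lem:coupling_sum_independent}.
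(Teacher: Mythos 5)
Your three-part decomposition (truncation, discretization, wraparound) and the order in which you combine the lemmas match the paper's proof of Theorem~\ref{thm:approximation_eps_upper_bound} via Corollary~\ref{cor:composePRV_coupling} and Theorem~\ref{thm:approximation}. However, there is a concrete factor-of-two gap in your discretization step that makes the stated mesh size $h$ fail to deliver the claimed $\eps_\error$. You write that Lemma~\ref{lem:discretizePRV_coupling} yields $|\tY_i - Y_i^L|\le_0 h$ and then apply Lemma~\ref{lem:coupling_sum_independent} with coupling bound $h$, giving
\[
\Big|\sum_i \tY_i - \sum_i Y_i^L\Big|\le_{\eta}\ h\sqrt{2k\log(2/\eta)}.
\]
Substituting $h=\eps_\error/\sqrt{\tfrac{k}{2}\log\tfrac{12}{\delta_\error}}$ and $\eta=\delta_\error/6$ gives $2\eps_\error$, not $\eps_\error$, so your claim that ``using the theorem's formula for $h$ makes the right-hand side at most $\eps_\error$'' does not hold with your coupling bound.

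The missing observation is the actual content of Lemma~\ref{lem:discretizePRV_coupling}: the coupling satisfies the sharper $|Y_i^L - (\tY_i - \mu_i)|\le_0 h/2$ for a deterministic shift $\mu_i\in[0,h/2]$, because the bin has width $h$ and $\tY_i$ sits at the bin center plus the fixed shift $\mu_i$. Since $\mu_i$ is deterministic, the difference $Y_i^L - \tY_i$ is confined to an interval of \emph{length} $h$ (namely $[-h/2-\mu_i,\,h/2-\mu_i]$), and Hoeffding's inequality only cares about the interval length, not its position. This is exactly why the paper introduces the shifted variant Lemma~\ref{lem:coupling_sum_independent2}: applying it with bound $h/2$ yields $\tfrac{h}{2}\sqrt{2k\log(2/\eta_0)}=h\sqrt{\tfrac{k}{2}\log(2/\eta_0)}=\eps_\error$, matching the theorem. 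By coarsening to $|\tY_i - Y_i^L|\le_0 h$ and applying the unshifted Lemma~\ref{lem:coupling_sum_independent}, you implicitly treat the error as ranging over an interval of length $2h$, losing the factor of two. The rest of your bookkeeping (tail bounds giving $\eta_1\le\delta_\error/6$ and $\Pr[|Y|\ge L-\eps_\error]\le\delta_\error/3$, double-counting $\eta_0$ and $\eta_1$ in the wraparound term, then invoking Lemma~\ref{lem:coupling_to_privacy_curves}) is exactly the paper's argument; and the runtime claim is correct, although the paper obtains it directly by taking one FFT per distinct mechanism plus one inverse FFT (raising each transform to the appropriate power entrywise in the Fourier domain), rather than by repeated squaring.
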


\begin{remark}
	A simple way to set $L$ such that the condition (\ref{eqn:L_delta}) holds is by choosing an $L$ such that:
\begin{equation}
\label{eqn:L_eps}
 L\ge 2+\max\left\{\eps_\error+\eps_{\cM}\lp \frac{\delta_\error}{4}\rp, \max_{i\in [k]}\ \eps_{\cM_i}\lp \frac{\delta_\error}{8k}\rp\right\}
\end{equation}
where $\eps_{\cA}(\delta)$ is the inverse of $\delta_\cA(\eps)$. To set the value of $L$, we do not need the exact value of $\eps_{\cM}$ (or $\eps_{\cM_i}$). We only need an upper bound on $\eps_{\cM}$, which can often by obtained by using the RDP Accountant or any other method to derive upper bounds on privacy.
\end{remark}

\section{Experiments}
\label{sec:experiments}

In this section, we demonstrate the utility of our composition method by computing the privacy curves for the DP-SGD algorithm which is one of the most important algorithms in differential privacy.

The DP-SGD algorithm~\cite{Abadi16} is a variant of stochastic gradient descent with $k$ steps. In each step, the algorithm selects a $p$ fraction of training examples uniformly at random. The algorithm adds a Gaussian vector with variance $\propto \sigma^2$ to the clipped gradient of the selected batch. Then it performs a gradient step (or any other iterative methods) using the noisy gradient computed. The privacy loss of DP-SGD involves composing the privacy curve of each iteration with itself $k$ times. The PRVs for each iteration have a closed form and depend only $p,\sigma$ (see Appendix). Our algorithms use this closed form of PRVs.

See Figure~\ref{fig:ours_vs_Koskela}(b) for the comparison between our algorithm and the GDP and RDP Accountant.
Our method provides a lower and upper bound of the privacy curve according to \eqref{eqn:upperlower_delta}.  
In Figure~\ref{fig:ours_vs_Koskela}(a), we compare our algorithm with~\cite{KoskelaJPH21} (implemented in~\cite{PLDAccountant}).
Under the same mesh size, our algorithm computes a much closer upper and lower bound. 

We validate our program for the case $p=1$. When $p=1$, we have an exact formula for 

\begin{equation}
\label{eqn:analytical_solution_p=1}
\delta(\eps) = \Phi\lp -\frac{\eps}{\mu}+\frac{\mu}{2} \rp - e^\eps \Phi\lp -\frac{\eps}{\mu}-\frac{\mu}{2} \rp    
\end{equation}
 where $\mu=\frac{\sqrt{k}}{\sigma}$. In  Figure~\ref{fig:validation}, we show that the true privacy curve is indeed sandwiched between the bounds we compute and that the vertical distance between our bounds is indeed $2 \epsilon_\error$ with a neglible $\delta_\error$ of $10^{-10}$.

\begin{figure}[h]
    \centering
    \resizebox{!}{0.4\textwidth}{\large \input{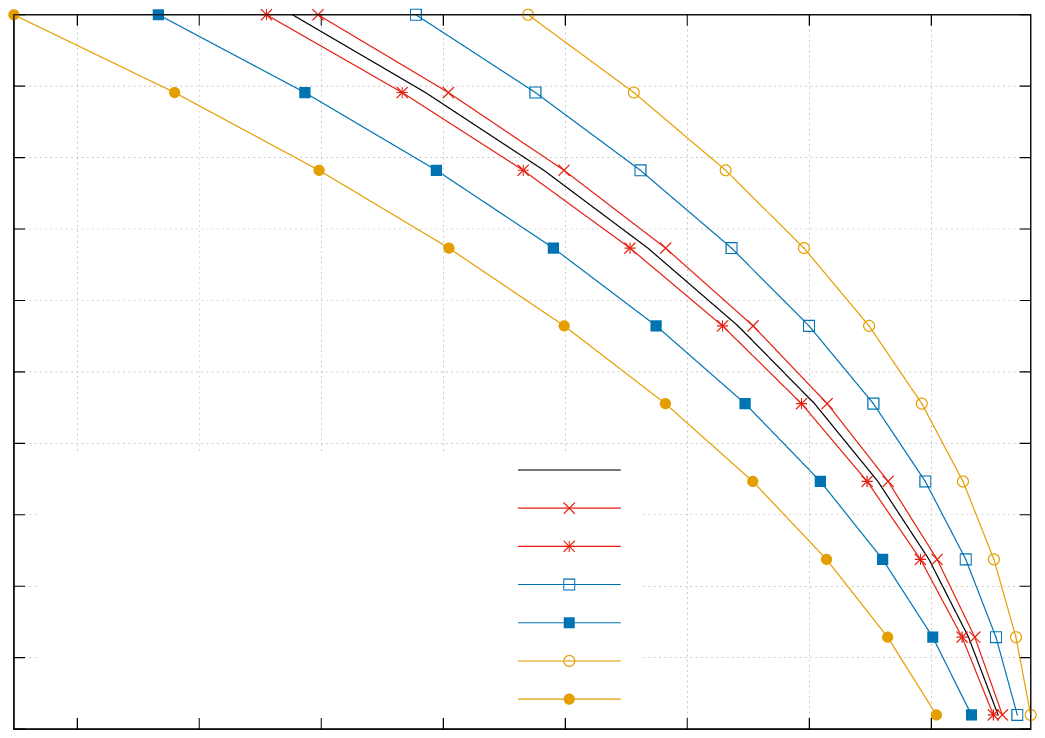}}
    \caption{Setting $p=1$ and comparing to the analytical solution~(\ref{eqn:analytical_solution_p=1}).} 
    \label{fig:validation}
\end{figure}

\paragraph{Floating point errors} Note that our error analysis in Section~\ref{sec:error_analysis} ignores floating point errors. This is because they are negligible compared to the discretization and truncation errors we analyzed in Section~\ref{sec:error_analysis} for the range of $\delta$ we are interested in. Our implementation uses {\verb long } {\verb double } floating point format which is platform dependent, however, it guarantees a precision at least as good as double precision which has a resolution of $10^{-15}$.
Computations involving $\delta$ of these orders of magnitude suffer from floating point inaccuracies.
Our implementation therefore only allows $\delta$ values which are greater than $10^{-10}$ which sufficies for practical use cases. See Appendix~\ref{sec:numerical_precision} for more details.

\subsection{Comparison with \cite{KoskelaJPH21}}
\label{sec:runtime_experiments}

\begin{figure}[h]
    \centering
    \resizebox{!}{0.4\textwidth}{\large \input{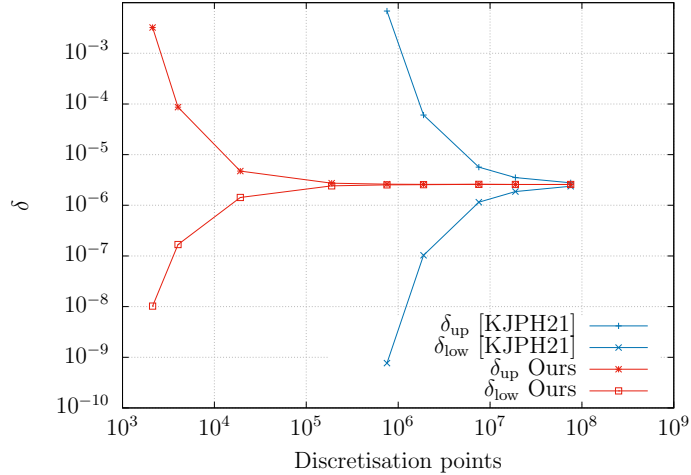}}
    \caption{Comparison of error bounds of $\delta$ with varying number of discretisation points for $p=4\times10^{-3}, \sigma=0.8, \varepsilon=1.5, k=1000$.} 
    \label{fig:comparen}
\end{figure}

\begin{figure}[h]
    \begin{subfigure}[b]{0.33\textwidth}
        \hspace{-1.5cm}\resizebox{1.7\linewidth}{!}{\input{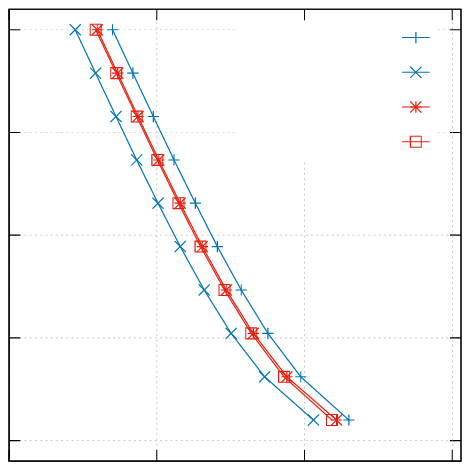}}
        \caption{$k=10$}
    \end{subfigure}
    \begin{subfigure}[b]{0.33\textwidth}
        \hspace{-1.5cm}\resizebox{1.7\linewidth}{!}{\input{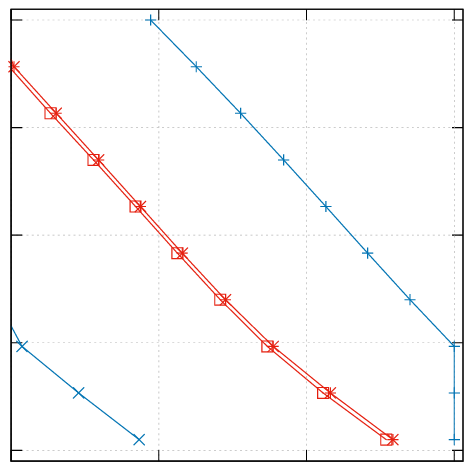}}
        \caption{$k=100$}
    \end{subfigure}
    \begin{subfigure}[b]{0.33\textwidth}
        \hspace{-1.5cm}\resizebox{1.7\linewidth}{!}{\input{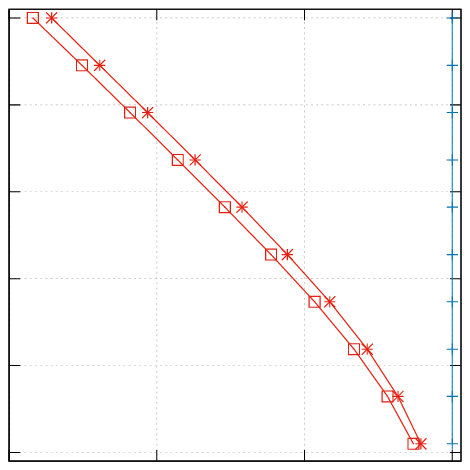}}
        \caption{$k=1000$}
    \end{subfigure}
    \caption{
        Comparing different error bounds using the same mesh size $8\times10^{-4}$ under different number of steps $k = 10, 100, 1000$. (With $p=10^{-2}$, $\sigma=0.8$.)
        \label{fig:comparek}}
\end{figure}

\begin{figure}[h]
    \begin{subfigure}[b]{0.49\textwidth}
        \resizebox{\linewidth}{!}{\large  \input{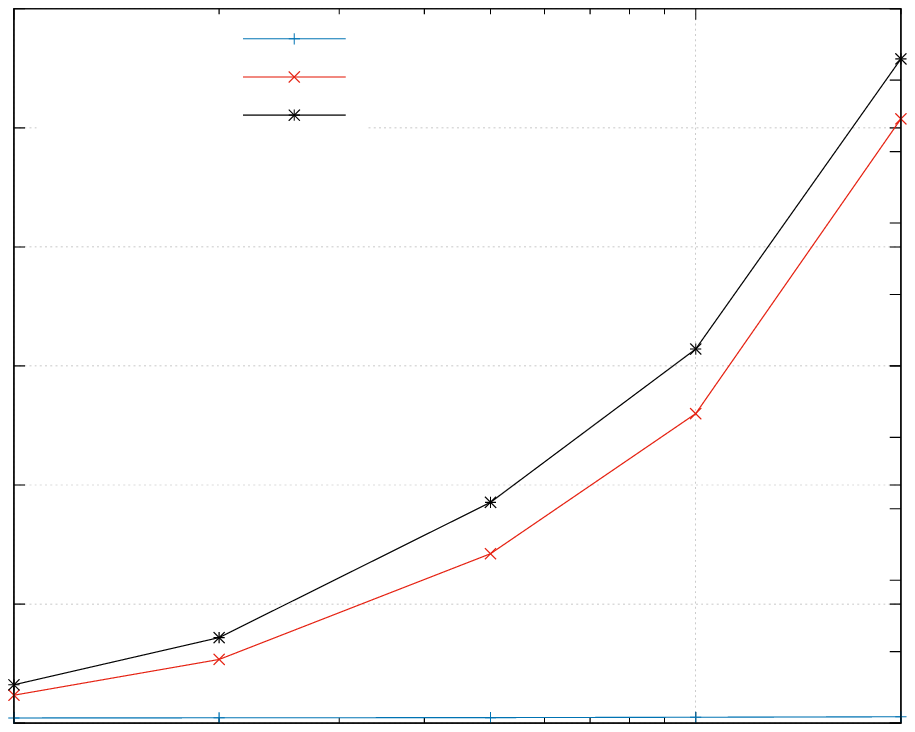}}

    \end{subfigure}
        \qquad
    \begin{subfigure}[b]{0.49\textwidth}
        \resizebox{\linewidth}{!}{\large \input{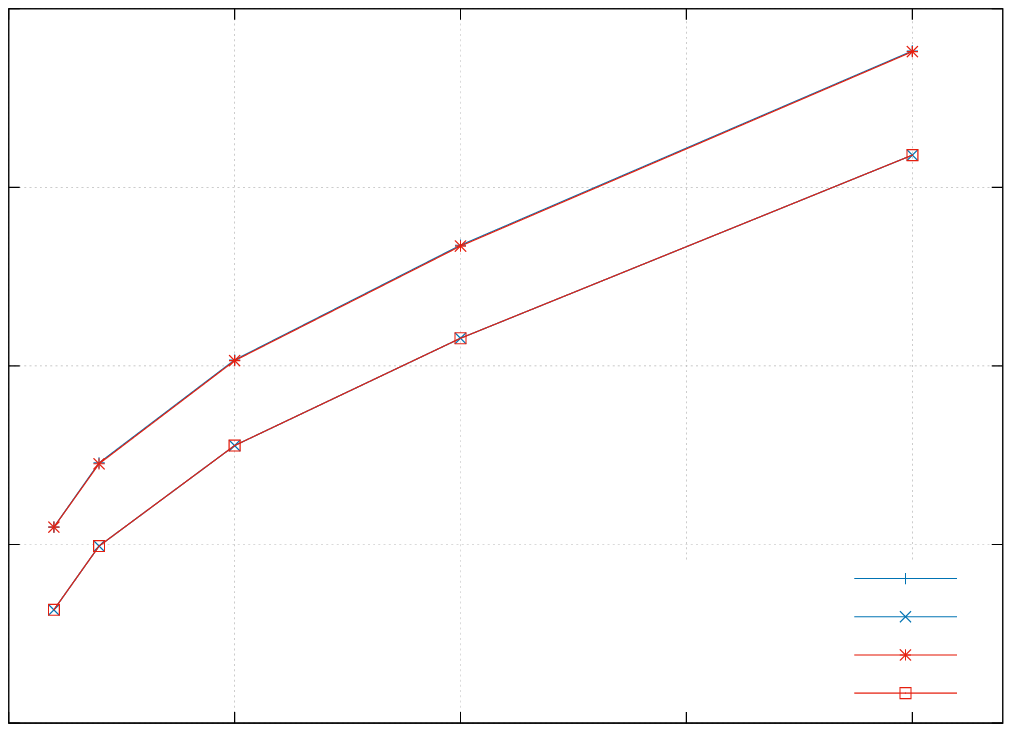}}

    \end{subfigure}

    \caption{(a) Comparing runtimes for our algorithm with that of~\cite{KoskelaJPH21} when aligned on accuracy for $\sigma = 0.8$, $p=4\times10^{-3}$.
We can see a significant reduction in runtime in particular for large number of DPSGD steps. We were not able to run the algorithm of~\cite{KoskelaJPH21} beyond 2,000 steps, since it becomes unstable beyond that point.\protect\footnotemark.
We also plot the speed up directly on the secondary $y$-axis. (b) Verification of the alignment of the error bounds of both algorithms at $\varepsilon=1.5$.}
    \label{fig:runtimes}
\end{figure}
\footnotetext{We are using the implementation of~\cite{KoskelaJPH21} from~\cite{PLDAccountant}.}

In this section, we provide more results demonstrating the practical use of our algorithm.
We compare runtimes of our algorithm with~\cite{KoskelaJPH21}, which is the state-of-the-art, for typical values of privacy parameters ($\sigma = 0.8$, $p=4\times10^{-3}$, $\varepsilon=1.5$).

See Figure~\ref{fig:comparen} for the effect of the number of discretisation points $n$ on the accuracy of $\delta$.
Our algorithm requires about a few orders of magnitude smaller number of discretization points to converge compared to the algorithm of \cite{KoskelaJPH21}.
A similar picture can be seen in Figure~\ref{fig:comparek}.
While for a small number of compositions, the algorithm of ~\cite{KoskelaJPH21} gives reasonable estimates, for a large number of compositions, their error bounds worsen quickly.

We note that runtimes are directly proportional to the memory required by the algorithms and so a separate memory analysis is not required; the runtime and memory are dominated by the number of points in the discretization of PRV.
All experiments are performed on a Intel Xeon W-2155 CPU with 3.30GHz with 128GB of memory.

In order to compare runtimes, we align the accuracy of both FFT algorithms.
We find sets of numerical parameters (number of discretization bins and domain length) such that both algorithms give similarly accurate bounds and verify it visually (see Figure \ref{fig:runtimes} (b)).
Figure \ref{fig:runtimes} illustrates the runtimes for varying numbers of DPSGD steps.
We observe a significant reduction in the runtime using our algorithms.

\subsection*{Acknowledgements}
We would like to thank Janardhan Kulkarni and Sergey Yekhanin for several useful discussions and encouraging us to work on this problem.
L.W. would like to thank Daniel Jones and Victor R\"uhle for fruitful discussions and helpful guidance.

\FloatBarrier
\bibliographystyle{alpha}
\bibliography{references}

\appendix


\section{Effect of floating point arithmetic}
\label{sec:numerical_precision}

In this section, we demonstrate the effect of floating point inaccuracies on the computed privacy parameters.
Figure \ref{fig:num_error} compares lower and upper bounds of the privacy curve with the analytical solution for small values of $\delta$.
As mentioned in section \ref{sec:experiments}, we use a floating point representation with a resolution of at least $10^{-15}$.
The number of discretization points in this examples are on the order of $10^4$.
Consequently, we expect floating point inaccuracies to become dominant for values on the order of $10^{-11}$.
This can be also seen in the illustration, where the lower and upper bound fail to produce meaningful results for $\delta < 2 \times 10^{-11}$.

\begin{figure}[h]
    \centering
    \resizebox{!}{0.4\textwidth}{\large \input{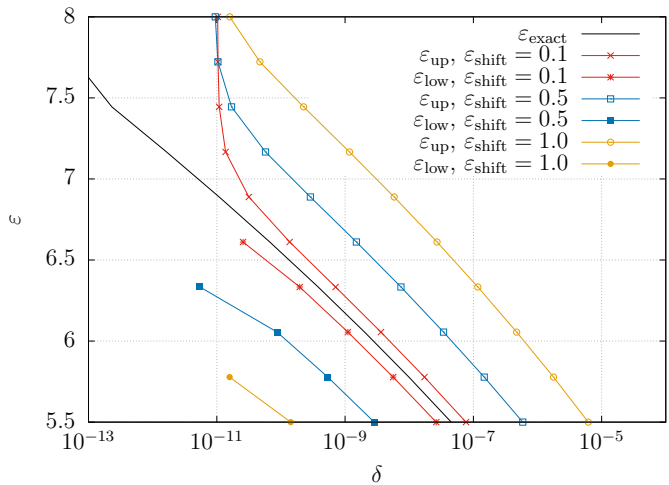}}
    \caption{Setting $p=1$ and comparing to the analytical solution~(\ref{eqn:analytical_solution_p=1}) for values of $\delta$ beyond expected floating point accuracy.} 
    \label{fig:num_error}
\end{figure}

\section{Privacy Loss Random Variables}
\label{sec:appendix_PRV}
In this section, we continue the discussion on privacy random variables in Section \ref{sec:PRV}. First, we give the proof of the formula for PRVs of $\delta(P||Q)$ and the formula for a privacy curve given its PRVs (Theorem \ref{thm:PRV_LLRV}).

{\renewcommand\footnote[1]{}\PRVLLRV*}
\begin{proof}
We will first verify that $Y(t)=e^t X(t).$ This is equivalent to proving that $\E_Y[\phi(Y)]=\E_X[\phi(X)e^X]$ for any test function $\phi:\bR \to [0,1]$. This is true since
	\begin{align*}
		\E_Y[\phi(Y)]&= \E_{\omega\sim Q}\lb \phi\lp\log\lp \frac{Q(\omega)}{P(\omega)}\rp\rp\rb \\
		&= \E_{\omega\sim P}\lb \phi\lp\log\lp \frac{Q(\omega)}{P(\omega)}\rp\rp \frac{Q(\omega)}{P(\omega)}\rb\\
		&= \E_X[\phi(X)e^X].
	\end{align*}
	We will now prove that $\delta(X||Y)=\delta(P||Q).$
	We have
	\begin{align*}
		\delta(P||Q)(\eps) &= \sup_{S \subset \Omega} \Pr[Q\in S] - e^\eps\Pr[P\in S]\\
		&= \Pr[Q\in S_\eps] - e^\eps\Pr[P\in S_\eps]
	\end{align*}
	where $$S_\eps = \left\{\omega\in \Omega: \frac{Q(\omega)}{P(\omega)} > e^\eps \right\}=\left\{\omega\in \Omega: \log\lp\frac{Q(\omega)}{P(\omega)}\rp > \eps \right\}.$$ 
	Therefore $\Pr[Q\in S_\eps]=\Pr[Y > \eps]$ and $\Pr[P\in S_\eps]=\Pr[X > \eps]$. To complete the proof, note that 
	\begin{align*}
		\delta(X||Y)(\eps) &= \sup_{T \subset \bR} \Pr[Y\in T] - e^\eps\Pr[X\in T]\\
		&= \Pr[Y\in T_\eps] - e^\eps\Pr[X\in T_\eps]
	\end{align*}
	where $$T_\eps = \left\{t\in \bR: \frac{Y(t)}{X(t)} > e^\eps \right\}= \left\{t\in \bR: e^t > e^\eps \right\}=(\eps,\infty].$$
	Putting it all together, we have:
	\begin{align*}
		\delta(P||Q)(\eps)&=\Pr[Y>\eps] - e^\eps \Pr[X>\eps]=\delta(X||Y).
	\end{align*}
\end{proof}

{\renewcommand\footnote[1]{}\PRVprivacycurve*}
\begin{proof}	
Since the PDFs of PRVs $(X,Y)$ satisfy the relation $Y(t)=e^tX(t)$, we can rewrite the equation~\ref{eqn:delta_PRV_simple} in terms of just $Y$ or just $X$.
		\begin{align*}
		\delta(\eps)&=\Pr[Y\ge \eps]-e^\eps \Pr[X\ge \eps]\\
		&=\int_\eps^\infty Y(t) dt - \int_\eps^\infty e^\eps X(t) dt\\
		&=\int_\eps^\infty Y(t) dt - \int_\eps^\infty e^{\eps-t} Y(t) dt \tag{Since $Y(t)=e^tX(t)$}\\
		&=\int_\eps^\infty Y(t) (1-e^{\eps-t}) dt\\
		&=\int_{-\infty}^\infty Y(t) (1-e^{\eps-t})_+ dt\\
		&=\E_Y[(1-e^{\eps-Y})_+]
	\end{align*}
	To get the other form for $\delta(\eps)$, we use the integration by parts formula.
	\begin{align*}
		\delta(\eps)&=\int_{\eps}^\infty Y(t) (1-e^{\eps-t}) dt\\
		&=\int_{\eps}^\infty Y(t) dt + \int_{\eps}^\infty \lp -Y(t) \rp e^{\eps-t} dt\\
		&=\Pr[Y\ge \eps] + \lp \Pr[Y\ge t]e^{\eps-t}\Big|_\eps^\infty - \int_{\eps}^\infty \Pr[Y\ge t] \lp -e^{\eps-t}\rp dt \rp\\
		&=\Pr[Y\ge \eps] - \Pr[Y\ge \eps] + \int_{\eps}^\infty \Pr[Y\ge t] e^{\eps-t} dt\\
		&=\int_\eps^\infty e^{\eps-t} \Pr[Y\ge t] dt \\
		&=\int_0^\infty e^{-z} \Pr[Y\ge \eps+z] dz  \tag{Substituting $z=t-\eps$}\\
		&=\Pr[Y\ge \eps+Z]. \tag{where $Z$ is an exponential random variable}
	\end{align*}
We now prove the converse relation by differentiating the expression for $\delta(\eps)$ twice. We have:
	\begin{align*}
 	&\delta(\eps)=\int_\eps^\infty Y(t) dt - e^{\eps}\int_\eps^\infty e^{-t} Y(t) dt\\
	\implies & \delta'(\eps)=-Y(\eps)+e^{\eps}\cdot e^{-\eps}Y(\eps)-e^{\eps}\cdot \int_\eps^\infty e^{-t} Y(t) dt = -e^{\eps}\cdot \int_\eps^\infty e^{-t} Y(t) dt \\
	\implies & e^{-\eps}\delta'(\eps)=-\int_\eps^\infty e^{-t} Y(t) dt \\
	\implies & e^{-\eps}\delta''(\eps)-e^{-\eps}\delta'(\eps)=e^{-\eps} Y(\eps) \\
	\implies & Y(\eps)=\delta''(\eps)-\delta'(\eps).
	\end{align*}

\end{proof}

\subsection{Examples of privacy loss random variables}
In this section, we state the PRVs for a few standard mechanisms. 

\begin{proposition}[Gaussian Mechanism]
The PRVs for $\delta(\cN(\mu,1)||\cN(0,1))$ are:
$$X=\cN(-\mu^2/2,\mu^2) \text{ and } Y=\cN(\mu^2/2,\mu^2).$$
\end{proposition}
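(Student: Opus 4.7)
The plan is to apply Theorem~\ref{thm:PRV_LLRV} directly with $P=\cN(\mu,1)$ and $Q=\cN(0,1)$. According to that theorem, the PRVs are the log-likelihood ratio random variable $\log(Q(\omega)/P(\omega))$ evaluated under $\omega\sim P$ (giving $X$) and under $\omega\sim Q$ (giving $Y$). So the task reduces to (i) computing this log-likelihood ratio as an explicit function of $\omega$, and (ii) reading off its distribution under each of the two base measures.

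First I would write out the Gaussian densities and observe that the $(2\pi)^{-1/2}$ normalizing constants cancel, so
\[
\log\frac{Q(\omega)}{P(\omega)} = \frac{(\omega-\mu)^2}{2} - \frac{\omega^2}{2} = \frac{\mu^2}{2} - \mu\omega,
\]
which is an affine function of $\omega$. Then for $X$, I set $\omega\sim \cN(\mu,1)$; since affine transformations of a Gaussian are Gaussian, $X = \mu^2/2 - \mu\omega$ has mean $\mu^2/2 - \mu\cdot\mu = -\mu^2/2$ and variance $\mu^2\cdot 1 = \mu^2$, giving $X\sim \cN(-\mu^2/2,\mu^2)$. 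For $Y$, I set $\omega\sim\cN(0,1)$ and get mean $\mu^2/2$ and variance $\mu^2$, i.e.\ $Y\sim \cN(\mu^2/2,\mu^2)$.

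There is no serious obstacle here: the proof is a one-line density computation followed by the elementary fact that affine maps preserve Gaussianity. The only sanity check worth noting is the consistency relation $Y(t) = e^t X(t)$ from the definition of PRVs, which indeed holds since the two Gaussians share variance $\mu^2$ and their means differ by exactly $\mu^2$, so the ratio of their densities at $t$ equals $\exp(t)$ (this also serves as an internal cross-check of the computation).
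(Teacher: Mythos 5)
Your proof is correct and follows essentially the same route as the paper: apply Theorem~\ref{thm:PRV_LLRV}, compute $\log(Q(\omega)/P(\omega)) = \mu^2/2 - \mu\omega$, and read off the Gaussian distribution of this affine map under $P$ and under $Q$. The paper writes out the chain of substitutions for $Y$ only and cites symmetry for $X$, while you spell out both and add a worthwhile sanity check that the resulting densities satisfy $Y(t)=e^t X(t)$; these are cosmetic differences only.
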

\begin{proof}
	Let $P=\cN(\mu,1)$ and $Q=\cN(0,1)$.
	By Theorem~\ref{thm:PRV_LLRV},
	\begin{align*}
		Y &\sim \log\lp \frac{Q(t)}{P(t)}\rp \text{ where } t\sim Q\\
		&\sim \log\lp \frac{\exp(-t^2/2)}{\exp(-(t-\mu)^2/2)}\rp \text{ where } t\sim \cN(0,1)\\
		&\sim \frac{(t-\mu)^2}{2} - \frac{t^2}{2} \text{ where } t\sim \cN(0,1)\\
		&\sim \frac{\mu^2}{2}-\mu t \text{ where } t\sim \cN(0,1)\\
		&= \cN\lp \frac{\mu^2}{2},\mu^2 \rp.
	\end{align*}
	A similar calculation shows that $X=\cN\lp -\frac{\mu^2}{2},\mu^2\rp$
\end{proof}

\begin{proposition}[Laplace Mechanism]
	The PRVs for the privacy curve $\delta\lp\Lap\lp \mu,1\rp||\Lap\lp 0,1\rp\rp$ are:
	$$X=|Z|-|Z-\mu| \text{ and } Y= |Z-\mu|-|Z|$$ where $Z\sim \Lap(0,1).$
\end{proposition}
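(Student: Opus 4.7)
The plan is to apply Theorem~\ref{thm:PRV_LLRV} directly with $P=\Lap(\mu,1)$ and $Q=\Lap(0,1)$, in complete analogy with the Gaussian mechanism calculation above. Using the Laplace densities $P(t)=\tfrac{1}{2}\exp(-|t-\mu|)$ and $Q(t)=\tfrac{1}{2}\exp(-|t|)$, the log-likelihood ratio simplifies cleanly to
\[
\log\frac{Q(t)}{P(t)} \;=\; |t-\mu|-|t|,
\]
so the entire proof reduces to substituting the correct sampling distribution for $t$ in each of the two PRV expressions.

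For the random variable $Y$, the theorem requires $t\sim Q=\Lap(0,1)$, so writing $t=Z$ with $Z\sim\Lap(0,1)$ gives immediately $Y=|Z-\mu|-|Z|$, matching the claim. For $X$, we sample $t\sim P=\Lap(\mu,1)$; exploiting that Laplace is a location family, I would write $t=\mu+Z$ with $Z\sim\Lap(0,1)$, yielding $X=|Z|-|Z+\mu|$. To reconcile this with the stated form $|Z|-|Z-\mu|$, I would invoke the symmetry of $\Lap(0,1)$: the density is even, so $Z$ and $-Z$ have the same distribution, and replacing $Z$ by $-Z$ turns $|Z+\mu|$ into $|Z-\mu|$ while leaving $|Z|$ unchanged.

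There is no real obstacle here; the only mildly non-mechanical step is noticing that the two admissible re-parametrisations of the $\Lap(\mu,1)$ sample ($t=\mu+Z$ versus $t=\mu-Z$) give distributionally equivalent but superficially different expressions for $X$, and that the statement of the proposition uses the second convention in order to exhibit $X$ and $Y$ in the visibly symmetric pair $(|Z|-|Z-\mu|,\,|Z-\mu|-|Z|)$. Once that is noted, the proof is a three-line calculation parallel to the Gaussian case.
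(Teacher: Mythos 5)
Your proposal is correct and follows the same route as the paper: invoke Theorem~\ref{thm:PRV_LLRV}, compute the log-likelihood ratio $\log(Q(t)/P(t))=|t-\mu|-|t|$, and substitute $t\sim Q$ for $Y$ and $t\sim P$ for $X$. The only difference is cosmetic: the paper dispatches the $X$ case with ``a similar calculation,'' whereas you spell out the shift $t=\mu+Z$ followed by the symmetry replacement $Z\mapsto -Z$ that converts $|Z|-|Z+\mu|$ into the stated form $|Z|-|Z-\mu|$; that extra care is warranted and matches what the paper implicitly relies on.
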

\begin{proof}
	Let $P=\Lap(\mu,1)$ and $Q=\Lap(0,1)$.
	By Theorem~\ref{thm:PRV_LLRV},
	\begin{align*}
		Y &\sim \log\lp \frac{Q(t)}{P(t)}\rp \text{ where } t\sim Q\\
		&\sim \log\lp \frac{\exp(-|t|)}{\exp(-|t-\mu|}\rp \text{ where } t\sim \Lap(0,1)\\
		&\sim |t-\mu|-|t| \text{ where } t\sim \Lap(0,1)\\
		&= |Z-\mu|-|Z| \text{ where } Z\sim \Lap(0,1).
	\end{align*}
	A similar calculation shows that $X=|Z|-|Z-\mu|$ where $Z\sim\Lap(0,1).$
\end{proof}

\begin{proposition}[$(\eps,\delta)$-DP]
	The PRVs for the privacy curve of a $(\eps,\delta)$-DP algorithm are
	$$X=\begin{cases}
	-\infty & \text{w.p. } \delta\\
	-\eps & \text{w.p. }  \frac{(1-\delta)e^\eps}{e^\eps+1}\\
	\eps & \text{w.p. } \frac{1-\delta}{e^\eps+1},
	\end{cases}$$
	$$Y=\begin{cases}
	-\eps & \text{w.p. } \frac{1-\delta}{e^\eps+1}\\
	\eps & \text{w.p. }  \frac{(1-\delta)e^\eps}{e^\eps+1}\\
	\infty & \text{w.p. } \delta.
	\end{cases}$$
\end{proposition}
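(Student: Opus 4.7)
The plan is to exhibit a specific pair of distributions $(P, Q)$ whose privacy curve $\delta(P||Q)$ is the tight worst-case privacy curve of an $(\eps, \delta)$-DP algorithm, and then apply Theorem~\ref{thm:PRV_LLRV} to read off the log-likelihood ratio random variables. Since, by the remark after Theorem~\ref{thm:PRV_privacycurve}, the PRVs depend only on the privacy curve itself, it suffices to produce any one such $(P, Q)$.

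Concretely, I would take $P$ and $Q$ supported on four atoms $\{a_0, a_1, a_2, a_3\}$ with masses
\begin{align*}
P(a_0) &= \delta,                                   & Q(a_0) &= 0, \\
P(a_1) &= 0,                                        & Q(a_1) &= \delta, \\
P(a_2) &= \tfrac{1-\delta}{e^\eps+1},               & Q(a_2) &= \tfrac{(1-\delta)e^\eps}{e^\eps+1}, \\
P(a_3) &= \tfrac{(1-\delta)e^\eps}{e^\eps+1},       & Q(a_3) &= \tfrac{1-\delta}{e^\eps+1}.
\end{align*}
A quick check shows that both distributions sum to $1$, and the likelihood ratios $Q/P$ on $a_0, a_1, a_2, a_3$ are $0, \infty, e^\eps, e^{-\eps}$ respectively.

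To justify that $(P,Q)$ indeed realizes the curve in question, I would evaluate $\delta(P||Q)(\eps') = \sup_S \Pr[Q\in S] - e^{\eps'}\Pr[P\in S]$ using the optimal set $S = \{\omega : Q(\omega) > e^{\eps'} P(\omega)\}$: for $\eps' \ge \eps$ only $a_1$ is in $S$, giving $\delta(P||Q)(\eps') = \delta$ so the pair is tight $(\eps,\delta)$-DP; for $-\eps \le \eps' < \eps$ the set also contains $a_2$; and for $\eps' < -\eps$ it additionally contains $a_3$. This is precisely the tight worst-case curve associated with an $(\eps,\delta)$-DP guarantee (i.e.\ the curve of the canonical dominating pair).

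Finally, applying Theorem~\ref{thm:PRV_LLRV}: drawing $\omega \sim P$ gives $\log(Q(\omega)/P(\omega)) = -\infty$ with probability $\delta$ (from $a_0$), $= \eps$ with probability $\tfrac{1-\delta}{e^\eps+1}$ (from $a_2$), and $= -\eps$ with probability $\tfrac{(1-\delta)e^\eps}{e^\eps+1}$ (from $a_3$), matching the stated $X$; drawing $\omega \sim Q$ symmetrically yields $+\infty, \eps, -\eps$ with probabilities $\delta, \tfrac{(1-\delta)e^\eps}{e^\eps+1}, \tfrac{1-\delta}{e^\eps+1}$, matching $Y$. The only mildly subtle point is interpreting the statement: an $(\eps,\delta)$-DP algorithm generally has many privacy curves (one per pair of neighboring databases), and the proposition refers to the tight worst-case among them; verifying that the four-point pair above realizes this worst-case curve is the main obstacle, and it is handled by the direct computation above.
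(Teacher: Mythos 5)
Your proof is correct, but it takes a genuinely different route from the paper's. The paper works top-down: it takes the candidate $(X,Y)$ as given and verifies the defining axioms of PRVs directly — checking that $Y(t)=e^tX(t)$ pointwise at $t=\pm\eps$, that $\Pr[Y>\eps]-e^\eps\Pr[X>\eps]=\delta$, and that $X=-Y$, which via Proposition~\ref{prop:finv_prv} gives symmetry of the curve; it then asserts that these facts pin down the $(\eps,\delta)$-DP curve. Your proof works bottom-up: you exhibit the canonical four-atom dominating pair $(P,Q)$, verify by an explicit sweep over the optimal rejection set $S=\{\omega: Q(\omega)>e^{\eps'}P(\omega)\}$ that $\delta(P\|Q)$ is the full piecewise-linear tight curve, and then read off the PRVs from Theorem~\ref{thm:PRV_LLRV} as log-likelihood ratios. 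The trade-off is instructive: the paper's argument is shorter but only explicitly checks the curve at the single point $\eps'=\eps$ (the remaining identification of the full curve is left implicit), whereas your construction computes $\delta(P\|Q)(\eps')$ on each of the three regimes $\eps'\ge\eps$, $-\eps\le\eps'<\eps$, $\eps'<-\eps$, making the identification with the tight $(\eps,\delta)$-DP curve explicit and self-contained. Your route also has the virtue of producing the dominating pair $(P,Q)$ itself, which is a useful object in its own right, and it illustrates how Theorem~\ref{thm:PRV_LLRV} is meant to be used; the cost is that you must first guess (or recall) the four-atom construction. One small refinement: you correctly note that whether to include atoms with $Q(\omega)=e^{\eps'}P(\omega)$ in $S$ is immaterial since they contribute zero to the supremum.
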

\begin{proof}
	It is easy to verify that $Y(t)=e^t X(t)$ for all $t\in \R$. We can also verify that $$\delta(\eps)=\Pr[Y> \eps] - e^\eps \Pr[X > \eps]=\delta.$$ Morever $X=-Y$, therefore the privacy curve $\delta(X||Y)$ is symmetric by Proposition~\ref{prop:finv_prv}, i.e., $\delta(X||Y)=\delta(Y||X)$. These conditions together imply that $X,Y$ are PRVs for the $(\eps,\delta)$-DP curve.
\end{proof}

Note that in the all the above examples, we have $X=-Y$ as the privacy curves are symmetric.

\subsection{Subsampling}
In this section, we calculate the PRVs for a subsampled mechanism given the PRVs for the original mechanism. Given two random variables $P,Q$ and a sampling probability $p\in [0,1]$,  $p\cdot P+(1-p)\cdot Q$ denotes the mixture where we sample $P$ w.p. $p$ and $Q$ w.p. $1-p.$
\begin{proposition}
	\label{prop:mixture_f_p_privacy_rv}
	Let $(X,Y)$ be the PRVs for a privacy curve $\delta(P||Q)$. Let $(X_p,Y_p)$ be the PRVs for $\delta_p=\delta(P||\ p\cdot P+(1-p)\cdot Q)$. Then
	\begin{align*}
		X_p&=\log(1+p(e^X-1)),\\
		Y_p&=\begin{cases}
			\log(1+p(e^Y-1)) \text{ w.p. } p\\
			\log(1+p(e^X-1)) \text{ w.p. } 1-p.
		\end{cases}
	\end{align*}
	The CDFs of $X_p$ and $Y_p$ are given by:
	\begin{align*}
		\CDF_{X_p}(t) &= \begin{cases}
			\CDF_X\lp\log\lp \frac{e^t-(1-p)}{p}\rp \rp &\text{ if } t\ge \log(1-p)\\
			0 &\text{ if } t< \log(1-p)
		\end{cases}\\
		\CDF_{Y_p}(t) &= \begin{cases}
			p\cdot \CDF_Y\lp \log\lp \frac{e^t-(1-p)}{p}\rp \rp + (1-p)\cdot \CDF_X\lp \log\lp \frac{e^t-(1-p)}{p}\rp \rp &\text{ if } t\ge \log(1-p)\\
			0 & \text{ if } t <\log(1-p).
		\end{cases}
	\end{align*}

\end{proposition}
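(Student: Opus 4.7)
The plan is to apply Theorem~\ref{thm:PRV_LLRV} to the pair $(P, M)$, where $M$ denotes the mixture appearing as the second argument of $\delta_p$, and to unwind the densities explicitly. Whichever way the mixture weights are assigned, the density ratio $M(\omega)/P(\omega)$ is an affine function of $Q(\omega)/P(\omega)$, so the log-ratio $\log(M(\omega)/P(\omega))$ depends only on $\omega$ through $Q(\omega)/P(\omega) = e^{L(\omega)}$, where $L(\omega) = \log(Q(\omega)/P(\omega))$ is the log-likelihood ratio of $P$ vs.\ $Q$. Concretely, when the mixture is $(1-p)P + p\,Q$ one obtains $\log(M(\omega)/P(\omega)) = \log(1 + p(e^{L(\omega)}-1))$. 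By Theorem~\ref{thm:PRV_LLRV}, $L(\omega)$ has the distribution of $X$ when $\omega \sim P$, which immediately yields $X_p = \log(1 + p(e^X-1))$.

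For $Y_p$, I would use the fact that a sample $\omega \sim M$ can be generated by first flipping a biased coin that picks between the two mixture components. By Theorem~\ref{thm:PRV_LLRV} again, $Y_p$ is the distribution of $\log(M(\omega)/P(\omega))$ for $\omega \sim M$, and conditioning on the coin outcome yields the two-case description: with probability $p$ we sample $\omega \sim Q$, in which case $L(\omega)$ has the distribution of $Y$ and hence $\log(M(\omega)/P(\omega))$ has the distribution of $\log(1 + p(e^Y-1))$; with probability $1-p$ we sample $\omega \sim P$, which reproduces the formula derived for $X_p$.

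The CDF formulas then follow by a one-variable change of variables. The function $g(t) = \log(1 + p(e^t - 1))$ is strictly increasing on $\R$, with image $(\log(1-p), \infty)$ and inverse $g^{-1}(s) = \log((e^s - (1-p))/p)$. For $t < \log(1-p)$ the event $\{X_p \le t\}$ is empty, while for $t \ge \log(1-p)$ monotonicity gives $\{X_p \le t\} = \{X \le g^{-1}(t)\}$, producing the stated formula for $\CDF_{X_p}$. Applying the same change of variables inside each branch of the mixture representation of $Y_p$ and weighting by $p$ and $1-p$ produces the weighted-average formula for $\CDF_{Y_p}$.

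The whole argument is essentially bookkeeping: the only places to be careful are (i) aligning the mixture weights with the correct PRV, so that the ``$p$ branch'' uses $Y$ and the ``$1-p$ branch'' uses $X$, and (ii) tracking that the support of $X_p$ and $Y_p$ is bounded below by $\log(1-p)$, which is exactly what forces the case split in the CDF formulas. No probabilistic estimates are required beyond elementary density manipulations and Theorem~\ref{thm:PRV_LLRV}; the main ``obstacle'' is simply getting the $p$ vs.\ $1-p$ convention correct throughout.
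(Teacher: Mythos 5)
Correct, and essentially the paper's argument: invoke Theorem~\ref{thm:PRV_LLRV}, observe the log-density-ratio of the mixture is a monotone function $g(t)=\log(1+p(e^t-1))$ of the log-likelihood ratio $L=\log(Q/P)$, then invert $g$ for the CDFs. The one genuine difference is where Theorem~\ref{thm:PRV_LLRV} is applied: you apply it directly to the pair $(P,M)$ and push forward through $L$ (whose law is $X$ under $P$ and $Y$ under $Q$), whereas the paper first substitutes the PRVs $(X,Y)$ for $(P,Q)$ and computes $\log\lp\frac{pY(t)+(1-p)X(t)}{X(t)}\rp$, which quietly relies on the identity $\delta\lp P||pQ+(1-p)P\rp=\delta\lp X||pY+(1-p)X\rp$ --- true because the mixture's density ratio factors through $Q/P$, but not spelled out in the paper; your version is a touch more self-contained. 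You also correctly flag that for the stated formula $X_p=\log((1-p)+pe^X)$ to hold, the mixture must put weight $p$ on $Q$, i.e.\ $(1-p)P+pQ$, which is what the paper's proof actually computes even though the proposition as written reads $p\cdot P+(1-p)\cdot Q$.
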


\begin{proof}
By Theorem~\ref{thm:PRV_LLRV}, 
\begin{align*}
	X_p &= \log\lp \frac{pY(t)+(1-p)X(t)}{X(t)}\rp \text{ where } t\sim X\\
	& = \log \lp p e^t + 1-p\rp \text{ where } t \sim X\\
	& = \log \lp p e^X + 1-p\rp.
\end{align*}
Similarly,
\begin{align*}
	Y_p &= \log\lp \frac{pY(t)+(1-p)X(t)}{X(t)}\rp \text{ where } t\sim pY+(1-p)X\\
	& = \log \lp p e^t + 1-p\rp \text{ where } t \sim pY+(1-p)X\\
	& = \begin{cases}
			\log(1+p(e^Y-1)) \text{ w.p. } p\\
			\log(1+p(e^X-1)) \text{ w.p. } 1-p.
		\end{cases} 
\end{align*}
The CDF of $X_p$ is given by:
\begin{align*}
	\Pr[X_p \le t] &= \Pr\lb \log \lp p e^X + 1-p\rp \le t\rb\\
	&= \Pr\lb X \le \log\lp \frac{e^t-(1-p)}{p}\rp \rb
\end{align*}
The CDF of $Y_p$ is given by:
\begin{align*}
	\Pr[Y_p \le t] &= p \Pr\lb \log \lp p e^Y + 1-p\rp \le t\rb + (1-p) \Pr\lb \log \lp p e^X + 1-p\rp \le t\rb\\
	&= p\Pr\lb Y \le \log\lp \frac{e^t-(1-p)}{p}\rp \rb + (1-p)\Pr\lb X \le \log\lp \frac{e^t-(1-p)}{p}\rp \rb.
\end{align*}
\end{proof}


\section{Missing Proofs in Error Analysis}
\label{sec:error_analysis_extra}

\subsection{Facts about Coupling Approximation}
Here we collect some useful properties of coupling approximations.
The following lemma shows that the coupling approximations satisfy a triangle inequality.
\begin{lemma}[Triangle inequality for couplings]
	\label{lem:coupling_triangle_ineq}
	Suppose $X,Y,Z$ are random variables such that $|X-Y|\le_{\eta_1} h_1$ and $|Y-Z|\le_{\eta_2} h_2$. Then $|X-Z|\le_{\eta_1 +\eta_2} h_1+h_2.$
\end{lemma}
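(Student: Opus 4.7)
The plan is to use the standard \emph{gluing lemma} from optimal transport (coupling theory) to construct a joint distribution on the triple $(X,Y,Z)$ from the two given pairwise couplings, and then apply the ordinary triangle inequality $|X-Z|\le |X-Y|+|Y-Z|$ together with a union bound.

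More concretely, the hypotheses give us a coupling $\pi_{XY}$ of $(X,Y)$ with $\Pr_{\pi_{XY}}[|X-Y|>h_1]\le \eta_1$ and a coupling $\pi_{YZ}$ of $(Y,Z)$ with $\Pr_{\pi_{YZ}}[|Y-Z|>h_2]\le \eta_2$. Since both couplings have the same $Y$-marginal (namely, the law of $Y$), we can disintegrate each along $Y$: write $\pi_{XY}(dx,dy)=\mu_Y(dy)\,K_1(y,dx)$ and $\pi_{YZ}(dy,dz)=\mu_Y(dy)\,K_2(y,dz)$, where $\mu_Y$ is the law of $Y$ and $K_1,K_2$ are the regular conditional distributions of $X$ and $Z$ given $Y$. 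I would then define the joint law
\[
\pi_{XYZ}(dx,dy,dz) \;=\; \mu_Y(dy)\,K_1(y,dx)\,K_2(y,dz),
\]
which, by construction, has $(X,Y)$-marginal equal to $\pi_{XY}$ and $(Y,Z)$-marginal equal to $\pi_{YZ}$, and in particular its $(X,Z)$-marginal is a valid coupling of $X$ and $Z$.

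Under $\pi_{XYZ}$, the deterministic inequality $|X-Z|\le |X-Y|+|Y-Z|$ implies
\[
\{|X-Z|>h_1+h_2\} \;\subseteq\; \{|X-Y|>h_1\}\cup\{|Y-Z|>h_2\},
\]
so a union bound gives
\[
\Pr_{\pi_{XYZ}}[|X-Z|>h_1+h_2] \;\le\; \Pr_{\pi_{XY}}[|X-Y|>h_1] + \Pr_{\pi_{YZ}}[|Y-Z|>h_2] \;\le\; \eta_1+\eta_2,
\]
which is exactly $|X-Z|\le_{\eta_1+\eta_2} h_1+h_2$.

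The only nontrivial step is the gluing construction, which requires the existence of regular conditional distributions; this holds whenever the underlying spaces are standard Borel (e.g., $\bR$), which is the setting of this paper, so I would either invoke this as a standard fact or remark on it briefly. Apart from that, the argument is a direct application of the triangle inequality for $|\cdot|$ plus a union bound, so no further obstacles are expected.
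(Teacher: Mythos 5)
Your proof is correct and takes essentially the same approach as the paper: both glue the two given couplings along the common $Y$-marginal to get a joint law on $(X,Y,Z)$, then apply the deterministic triangle inequality $|X-Z|\le|X-Y|+|Y-Z|$ and a union bound. The paper phrases the gluing as sampling the Markov chain $X\to Y\to Z$ while you disintegrate along $Y$ and make $X,Z$ conditionally independent given $Y$; these define the same joint distribution, and your version is merely more explicit about the regular-conditional-distribution machinery.
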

\begin{proof}
	There exists couplings $(X,Y)$ and $(Y,Z)$ such that
	$$\Pr[|X-Y|\ge h_1] \le \eta_1 \text{ and } \Pr[|Y-Z|\ge h_2] \le \eta_2.$$ From these two couplings, we can construct a coupling between $(X,Z)$: sample $X$, sample $Y$ from $Y|X$ (given by coupling $(X,Y)$) and finally sample $Z$ from $Z|Y$ (given by coupling $(Y,Z)$). Therefore for this coupling, we have:
	\begin{align*}
	 \Pr[|X-Z|\ge h_1+h_2] &\le \Pr[|(X-Y) + (Y-Z)|\ge h_1+h_2]\\
	  &\le \Pr[|X-Y| + |Y-Z|\ge h_1+h_2]\\
	  &\le \Pr[|X-Y|\ge h_1]+\Pr[|Y-Z|\ge h_2]\\
	  &\le \eta_1+\eta_2.
	\end{align*}
\end{proof}

The following lemma shows that small total variation distance implies good coupling approximation.
\begin{lemma}
	\label{lem:TV_coupling}
	If the total variation distance $d_{TV}(X,Y)\le \eta$, then $|X-Y| \le_\eta 0$.
\end{lemma}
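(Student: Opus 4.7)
The plan is to invoke the classical \emph{maximal coupling} characterization of total variation distance: for any two random variables $X,Y$, one has $d_{TV}(X,Y) = \min_{\pi} \Pr_\pi[X\ne Y]$, where the minimum ranges over all couplings $\pi$ of $X$ and $Y$. Once we have such a coupling attaining the minimum, the conclusion is immediate from the definition of $\le_\eta$: by Definition~\ref{def:couplingapx}, $|X-Y|\le_\eta 0$ means there is a coupling with $\Pr[|X-Y|>0]\le \eta$, i.e., $\Pr[X\ne Y]\le \eta$, which is exactly what the maximal coupling gives us since $d_{TV}(X,Y)\le \eta$.

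Concretely, I would briefly construct the maximal coupling to make the argument self-contained. Assume first that $X,Y$ have densities (or PMFs) $p,q$ on the same space. Let $c = \int \min(p(t),q(t))\,dt = 1 - d_{TV}(X,Y) \ge 1-\eta$. Define a joint distribution as follows: with probability $c$ sample $T$ from the density $\min(p,q)/c$ and output $(X,Y) = (T,T)$; with probability $1-c$, sample $X$ from $(p-\min(p,q))/(1-c)$ and independently $Y$ from $(q-\min(p,q))/(1-c)$. A one-line check of marginals shows this is a valid coupling of $X$ and $Y$, and by construction $\Pr[X=Y] \ge c \ge 1-\eta$, so $\Pr[X\ne Y]\le \eta$.

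Therefore the coupling witnesses $|X-Y|\le_\eta 0$, completing the proof. There is no real obstacle here; the only subtlety worth a sentence is that the construction works even when $X,Y$ are supported on $\bR$ (including point masses at $\pm\infty$) by the same recipe applied to the Radon--Nikodym derivatives with respect to any common dominating measure, which is the natural setting for PRVs in this paper.
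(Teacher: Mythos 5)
Your proof is correct and uses the same key fact as the paper, namely the maximal-coupling characterization $d_{TV}(X,Y)=\min_{\pi}\Pr_\pi[X\ne Y]$; the paper simply cites this as well known, whereas you also spell out the standard construction, but the argument is identical in substance.
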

\begin{proof}
	It is well known that for any two random variables $X,Y$, there exists a coupling such that $d_{TV}(X,Y)=\Pr[X\ne Y]$. This immediately implies what we want.
\end{proof}

\subsection{Bounding the error using tail bounds of PRVs}
The goal of this section is to bound the error of \textsf{ComposePRV} in terms of the tail bounds of the underlying PRVs.
\begin{theorem}
	\label{thm:approximation}
	Let $Y_1,Y_2,\dots,Y_k$ be PRVs and let $\tY$ be the approximation produced by the \textsf{ComposePRV} algorithm (Algorithm~\ref{alg:PrivacyComposition}) for $Y=\sum_{i=1}^k Y_i$ with truncation parameter $L$ and mesh size $$h=\frac{\eps_\error}{\sqrt{\frac{k}{2}\log \frac{2}{\eta_0}}}.$$ Then $$\delta_{\tY}(\eps+\eps_{\error}) -\delta_{\error}\le \delta_Y(\eps) \le \delta_{\tY}(\eps-\eps_{\error})+\delta_\error$$ where 
	\begin{align*}
		\delta_\error&= \Pr\lb\left|\sum_{i=1}^k \tY_i\right|\ge L\rb + \sum_{i=1}^k \Pr[|Y_i|\ge L]+ \eta_0\\
		&\le \Pr\lb\left|\sum_{i=1}^k Y_i\right|\ge L-\eps_\error\rb + 2\sum_{i=1}^k \Pr[|Y_i|\ge L]+ 2\eta_0.
	\end{align*}
\end{theorem}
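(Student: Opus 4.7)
The plan is to establish a good coupling approximation between $Y = \sum_i Y_i$ and the algorithm's output $\tY$, and then apply Lemma~\ref{lem:coupling_to_privacy_curves} to convert this into the stated privacy-curve bounds. I will decompose the coupling along the chain $Y \to S^L \to \tilde S \to \tY$, where $Y_i^L = Y_i|_{|Y_i|\le L}$ is the truncation of $Y_i$, $S^L = \sum_i Y_i^L$, and $\tilde S = \sum_i \tY_i$ is the exact (non-circular) sum of the discretizations. The triangle inequality for couplings (Lemma~\ref{lem:coupling_triangle_ineq}) then lets me add errors across links in the chain.

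First, couple $Y$ and $S^L$ by jointly sampling $Y_i$ and setting $Y_i = Y_i^L$ on the event $|Y_i|\le L$. A union bound yields $|Y - S^L|\le_{\eta_1} 0$ with $\eta_1 = \sum_i \Pr[|Y_i|\ge L]$. Second, I will appeal to the (to-be-established) discretization lemma which states $|\tY_i - Y_i^L|\le_0 h$ with $\E[\tY_i]=\E[Y_i^L]$ — the latter being precisely the reason for the mean-matching shift $\mu$ in Algorithm~\ref{alg:discretizePRV}. Since the pairs $(Y_i^L,\tY_i)$ are independent across $i$, Lemma~\ref{lem:coupling_sum_independent} gives $|S^L - \tilde S|\le_{\eta_0} h\sqrt{2k\log(2/\eta_0)} \le \eps_\error$ for the chosen mesh $h$. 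Third, for the FFT-induced circular wraparound, I observe that $\tY = \tilde S$ whenever $|\tilde S|<L$, so the identity coupling gives $|\tilde S - \tY|\le_{\eta_3} 0$ with $\eta_3 = \Pr[|\tilde S|\ge L]$.

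Combining these via Lemma~\ref{lem:coupling_triangle_ineq} yields $|Y - \tY|\le_{\delta_\error} \eps_\error$ with $\delta_\error = \eta_0 + \eta_1 + \eta_3$, which matches the first form in the theorem. Feeding this into Lemma~\ref{lem:coupling_to_privacy_curves} produces the two-sided bound $\delta_{\tY}(\eps+\eps_\error)-\delta_\error \le \delta_Y(\eps) \le \delta_{\tY}(\eps-\eps_\error)+\delta_\error$. For the second, simpler bound on $\delta_\error$, I will re-use the coupling between $Y$ and $\tilde S$ already built in the first two steps (which controls $|S - \tilde S|\le \eps_\error$ except on an event of probability $\eta_0 + \sum_i \Pr[|Y_i|\ge L]$) to estimate
\[
\Pr[|\tilde S|\ge L] \;\le\; \Pr[|S|\ge L-\eps_\error] + \sum_i \Pr[|Y_i|\ge L] + \eta_0,
\]
and substitute this into the expression for $\delta_\error$.

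The main obstacle is the discretization lemma supporting step two: one must verify both that the rounding plus mean-shift guarantees the deterministic coupling $|\tY_i - Y_i^L|\le h$ and simultaneously that $\E[\tY_i]=\E[Y_i^L]$, since the Hoeffding cancellation in Lemma~\ref{lem:coupling_sum_independent} hinges critically on the zero-mean difference. A secondary subtlety is making the wraparound argument for $\oplus_L$ precise when $\tY_i$ live on a shifted lattice $\mu + h\Z\cap[-L,L]$, so that the range $[-L,L]$ is in fact what FFT addresses modulo $2L$; one simply notes that the shift $\mu\in[0,h/2]$ keeps supports inside $(-L,L]$ and that modular addition coincides with ordinary addition on the event $|\tilde S|<L$.
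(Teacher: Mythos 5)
Your overall strategy mirrors the paper's proof exactly: decompose along the chain $Y \to \sum_i Y_i^L \to \sum_i \tY_i \to \tY$, bound each link via a coupling (truncation, discretization via Hoeffding, and circular wraparound), combine with the triangle inequality (Lemma~\ref{lem:coupling_triangle_ineq}), and finish with Lemma~\ref{lem:coupling_to_privacy_curves}. The paper's proof of Corollary~\ref{cor:composePRV_coupling} does precisely this, and your bound on $\Pr[|\tilde S| \ge L]$ by re-using the coupling between $Y$ and $\tilde S$ is also exactly what the paper does.

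However, there is a concrete factor-of-$2$ gap in the discretization step. You state the discretization lemma as $|\tY_i - Y_i^L| \le_0 h$ with $\E[\tY_i]=\E[Y_i^L]$ and then invoke Lemma~\ref{lem:coupling_sum_independent} to get $|\sum_i Y_i^L - \sum_i \tY_i| \le_{\eta_0} h\sqrt{2k\log(2/\eta_0)}$. Plugging in $h = \eps_\error/\sqrt{\tfrac{k}{2}\log\tfrac{2}{\eta_0}}$, this evaluates to $2\eps_\error$, \emph{not} $\eps_\error$ as you claim, so the resulting theorem would have $\eps_\error$ replaced by $2\eps_\error$ throughout. The resolution is that the rounding actually achieves the tighter centered bound $|Y_i^L - (\tY_i - \mu_i)| \le_0 h/2$ (each sample is rounded to the nearest lattice point, within distance $h/2$, and then shifted by $\mu_i$); the bound $|Y_i^L - \tY_i| \le h$ that you wrote is the coarser version obtained by absorbing the shift $\mu_i \in [0,h/2]$ into the error. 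The paper exploits the centered $h/2$ bound via a shift-aware Hoeffding lemma (its Lemma~\ref{lem:coupling_sum_independent2}, stated with $|Y_i - (\tY_i - \mu_i)| \le_0 h$): defining $W_i = Y_i^L - (\tY_i - \mu_i) \in [-h/2,h/2]$ with $\E[W_i]=\mu_i$ and applying Hoeffding to $\sum_i (W_i - \mu_i) = \sum_i(Y_i^L - \tY_i)$ gives deviation $(h/2)\sqrt{2k\log(2/\eta_0)} = \eps_\error$. To repair your argument, replace the discretization claim by $|Y_i^L - (\tY_i-\mu_i)| \le_0 h/2$ for some $\mu_i \in [0,h/2]$, and use the $\mu_i$-shifted Hoeffding lemma rather than Lemma~\ref{lem:coupling_sum_independent} directly. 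Everything else in your plan is sound.
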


\begin{remark}
	We can directly bound $\Pr\lb\left|\sum_{i=1}^k \tY_i\right|\ge L\rb$ using moment generating functions as $$\Pr\lb\left|\sum_{i=1}^k \tY_i\right|\ge L\rb \le \inf_{\lambda >0} \frac{\prod_{i=1}^k\E[\exp(\lambda \tY_i)]}{e^{\lambda L}} + \inf_{\lambda >0} \frac{\prod_{i=1}^k\E[\exp(-\lambda \tY_i)]}{e^{\lambda L}}.$$ Sometimes, if we already have good upper bound for $\Pr\lb \left|\sum_i Y_i \right|\ge L\rb$, then the second bound on $\delta_\error$ is useful.
\end{remark}

The following key lemma shows that the \textsf{DiscretizePRV} algorithm (Algorithm~\ref{alg:discretizePRV}) produces a good coupling approximation to the PRV and preserves the mean. 
\begin{lemma}
	\label{lem:discretizePRV_coupling}
	Given a PRV $Y$, let $Y^L = Y\big|_{|Y|\le L}$ be its truncation. The approximation $\tY$ returned by \textsf{DiscretizePRV} satisfies $\E[\tY]=\E[Y^L]$ and $|Y^L-(\tY-\mu)|\le_{0} \frac{h}{2}$ for some $\mu$ where $h$ is the mesh size. We also have $|Y^L - Y|\le_\eta 0$ where $\eta=\Pr[|Y|\ge L].$
\end{lemma}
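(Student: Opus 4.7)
The plan is to verify the three claims of the lemma in turn, all by direct construction from the algorithm's definition. The proofs are elementary; the main thing to be careful about is how the normalization constant and the bin tiling interact at the boundary.

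For the third claim $|Y^L - Y| \le_\eta 0$, I would construct the obvious coupling: sample $Y$ first; if $|Y| \le L$, set $Y^L = Y$; if $|Y| > L$, sample $Y^L$ from its marginal distribution (i.e.\ $Y$ conditioned on $|Y| \le L$) independently of $Y$. This is a valid coupling because the marginal of $Y^L$ conditional on $|Y| \le L$ already matches $Y$ in that event. Under this coupling, $Y^L \ne Y$ only when $|Y| > L$, an event of probability exactly $\eta = \Pr[|Y| \ge L]$. Hence $\Pr[|Y^L - Y| > 0] \le \eta$ as required.

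For the first claim $\E[\tY] = \E[Y^L]$, I would simply unpack the definition of $\mu$ given in Algorithm~\ref{alg:discretizePRV}. Since $\tY = ih + \mu$ with (normalized) probability $q_i$,
\[
\E[\tY] = \mu + \sum_{i=-n}^{n} ih \cdot q_i = \left(\E[Y^L] - \sum_{i=-n}^{n} ih \cdot q_i\right) + \sum_{i=-n}^{n} ih \cdot q_i = \E[Y^L].
\]

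For the second claim $|Y^L - (\tY - \mu)| \le_0 h/2$, the key observation is that the bins $B_i \eqdef [ih - h/2, ih + h/2]$ for $i \in \{-n,\dots,n\}$ partition $[-L,L]$ up to measure-zero overlaps, since $L = nh + h/2$ by the choice $n = (L - h/2)/h$. Hence the \emph{unnormalized} $q_i$'s satisfy $\sum_i q_i = \Pr[Y \in [-L,L]] = 1-\eta$, and after normalization the probability that $\tY - \mu$ equals $ih$ is exactly $\Pr[Y^L \in B_i]$. I would then couple $Y^L$ and $\tY - \mu$ by first sampling $Y^L$, locating the bin $B_i$ it lies in, and setting $\tY - \mu \eqdef ih$ (the center of $B_i$). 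Under this coupling $|Y^L - (\tY-\mu)| \le h/2$ holds with probability $1$, which is the desired coupling approximation.

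The only obstacle worth flagging is bookkeeping at the boundaries of the bin tiling, e.g.\ whether bins are closed or half-open and how atoms of $Y$ at bin boundaries are handled. For continuous PRVs this is vacuous; in general one fixes a convention (e.g.\ $B_i = (ih - h/2, ih + h/2]$) so that the bins form a genuine partition of $(-L,L]$ and then $\sum_i q_i = \Pr[Y \in (-L,L]]$ exactly. No deeper ideas are needed beyond this.
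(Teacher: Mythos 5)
Your proof is correct and takes essentially the same approach as the paper: the same definition-unpacking for $\E[\tY]=\E[Y^L]$, the same bin-centering coupling for $|Y^L-(\tY-\mu)|\le_0 h/2$, and for $|Y^L - Y|\le_\eta 0$ you simply inline the construction behind the paper's appeal to the total-variation-to-coupling lemma (Lemma~\ref{lem:TV_coupling}), which is the identical idea. Your extra care about the bin-boundary convention and the normalization $\sum_i q_i = \Pr[Y\in(-L,L]]$ is a useful elaboration that the paper leaves implicit.
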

\begin{proof}
Since $d_{TV}(Y,Y^L) \le \Pr[|Y|\ge L]=\eta$, by Lemma~\ref{lem:TV_coupling}, $|Y-Y^L|\le_\eta 0$. It is clear that by construction $\E[\tY]=\E[Y^L],$
$$\E[\tY]=\mu + \sum_{i=-n}^{n} ih \cdot q_i = \lp \E[Y^L] - \sum_{i=-n}^{n} ih \cdot q_i\rp + \sum_{i=-n}^{n} ih \cdot q_i =\E[Y^L].$$ We will now construct the coupling between $(Y^L,\tY)$ such that $|Y^L-(\tY-\mu)|\le \frac{h}{2}$. The coupling is as follows: First sample $y\sim Y^L$. Suppose $y\in (ih-\frac{h}{2},ih+\frac{h}{2}]$ for some integer $i$ such that $-n\le i \le n$, then return $\ty = \mu + ih$. Clearly, the distribution of $\ty$ matches with $\tY$ and $|y-(\ty-\mu)| = |y-ih|\le \frac{h}{2}$.

\end{proof}

Since our error bound on $\tY$ is slightly different from the assumption in Lemma \ref{lem:coupling_sum_independent}, we need the following generalization using the same proof.

\begin{lemma}
	\label{lem:coupling_sum_independent2}
	Suppose $Y_1,Y_2,\dots,Y_k$ and $\tY_1,\tY_2,\dots,\tY_k$ are two collections of independent random variables such that $|Y_i-(\tY_i - \mu_i)|\le_0 h$ for some $\mu_i$ and $\E[Y_i]=\E[\tY_i]$ for all $i$, then $$\left|\sum_{i=1}^k Y_i - \sum_{i=1}^k \tY_i\right|\le_{\eta}h\sqrt{2k\log{\frac{2}{\eta}}}.$$
\end{lemma}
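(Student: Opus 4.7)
The proof will essentially reuse the strategy of Lemma~\ref{lem:coupling_sum_independent}, with the small twist that the coupled error $Y_i-\tY_i$ is no longer forced to lie in $[-h,h]$ but in a translated interval of length $2h$. Since Hoeffding's inequality only cares about the range (not the location) of the bounded summands, the same bound will go through once we subtract the mean.

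Concretely, the plan is as follows. First I would construct the joint coupling: for each $i$, take the coupling $(Y_i,\tY_i)$ guaranteed by the hypothesis $|Y_i-(\tY_i-\mu_i)|\le_0 h$ so that $|Y_i-\tY_i+\mu_i|\le h$ with probability $1$; take these couplings independently across $i$. Under this joint coupling the pairs $(Y_i,\tY_i)$ are mutually independent, hence so are the differences $Z_i:=Y_i-\tY_i$.

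Next I would extract the two facts needed for Hoeffding's inequality. The almost-sure bound $|Z_i+\mu_i|\le h$ shows that $Z_i\in[-h-\mu_i,\,h-\mu_i]$, an interval of length $2h$ (independent of $\mu_i$). The mean-matching hypothesis $\E[Y_i]=\E[\tY_i]$ gives $\E[Z_i]=0$, so each $Z_i$ is a centered random variable supported in an interval of length $2h$.

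Finally, applying Hoeffding's inequality to the independent, centered, bounded variables $Z_1,\ldots,Z_k$ yields
\[
\Pr\!\left[\Big|\sum_{i=1}^k Z_i\Big|\ge t\right]\;\le\;2\exp\!\left(-\frac{2t^2}{k(2h)^2}\right)\;=\;2\exp\!\left(-\frac{t^2}{2kh^2}\right).
\]
Choosing $t=h\sqrt{2k\log(2/\eta)}$ makes the right-hand side equal to $\eta$, which is exactly the claimed bound $\bigl|\sum_i Y_i-\sum_i \tY_i\bigr|\le_{\eta}h\sqrt{2k\log(2/\eta)}$ under the constructed coupling. There is no real obstacle here — the only subtlety worth flagging is making explicit that the translation $\mu_i$ drops out because Hoeffding depends on the range rather than the endpoints, which is precisely why the generalization from Lemma~\ref{lem:coupling_sum_independent} is essentially free.
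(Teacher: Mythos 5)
Your proof is correct and is exactly the argument the paper intends: the paper states Lemma~\ref{lem:coupling_sum_independent2} as a generalization of Lemma~\ref{lem:coupling_sum_independent} obtained ``using the same proof,'' and your observation that Hoeffding's inequality depends only on the range $2h$ of each centered summand $Z_i=Y_i-\tY_i$ (not on the location of the interval, so the shifts $\mu_i$ drop out) is precisely the point that makes the generalization immediate.
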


In the algorithm, we only calculate the distribution of $Y_1 \oplus Y_2 \oplus \dots \oplus Y_k$ instead of $Y_1+Y_2+\dots+Y_k$. The following simple lemma shows that this is still a good approximation as long as $\sum_i Y_i$ stays within $[-L,L]$ with high probability.

\begin{lemma}
\label{lem:convolution_wrapping_around_error}
Let $Y_1,Y_2,\dots,Y_k$ be random variables supported on $(-L,L]$. 
Then $$\left|\sum_{i=1}^k Y_i - \lp Y_1 \oplus_L Y_2 \oplus_L \cdots \oplus_L Y_k\rp\right| \le_\eta 0$$ where $$\eta = \Pr\lb\left|\sum_{i=1}^k Y_i\right|\ge L\rb.$$
\end{lemma}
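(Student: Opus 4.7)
The plan is to use the identity coupling between $\sum_{i=1}^k Y_i$ and $Y_1 \oplus_L \cdots \oplus_L Y_k$, where both sides are computed from the same sample of $(Y_1, \ldots, Y_k)$. Under this coupling, whenever the ordinary sum lies in $(-L, L]$, the modular reduction modulo $2L$ is a no-op, and so the two quantities are exactly equal.

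More concretely, I would first recall the definition of $\oplus_L$: for any $x \in \R$, $x \pmod{2L}$ is the unique element of $(-L, L]$ of the form $x - 2Ln$ with $n \in \Z$, and $Y_1 \oplus_L \cdots \oplus_L Y_k = (Y_1 + \cdots + Y_k) \pmod{2L}$ by associativity of the circular addition. Thus if the event $E = \{-L < \sum_i Y_i \le L\}$ occurs, then $\sum_i Y_i \pmod{2L} = \sum_i Y_i$, so the two random variables agree on $E$. Using the identity coupling, the probability that they disagree is at most $\Pr[E^c] = \Pr[\sum_i Y_i \le -L \text{ or } \sum_i Y_i > L] \le \Pr[|\sum_i Y_i| \ge L] = \eta$. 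This gives $|\sum_i Y_i - Y_1 \oplus_L \cdots \oplus_L Y_k| \le_\eta 0$ as claimed.

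There is essentially no obstacle here; the only mild subtlety is handling the boundary values $\pm L$ carefully, but since $\{|\sum_i Y_i| \ge L\}$ already includes $\sum_i Y_i = L$ and $\sum_i Y_i = -L$, the stated bound $\eta$ is a valid upper bound on the disagreement probability regardless of convention. The whole proof should be two or three lines.
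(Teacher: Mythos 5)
Your proof is correct and is essentially the same argument as the paper's: the paper bounds $\Pr[\sum_i Y_i \ne Y_1 \oplus_L \cdots \oplus_L Y_k]$ by $\Pr[|\sum_i Y_i| \ge L]$ under the implicit identity coupling, exactly as you do, just stated more tersely. Your expanded version spelling out the no-op observation for $\sum_i Y_i \in (-L,L]$ and the boundary handling is a fine elaboration of the same idea.
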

\begin{proof}
$$\Pr\lb\sum_{i=1}^k Y_i \ne \lp Y_1 \oplus_L Y_2 \oplus_L \cdots \oplus_L Y_k\rp \rb \le \Pr\lb\left|\sum_{i=1}^k Y_i\right|\ge L\rb.$$ This clearly implies what we want.
\end{proof}

Combining all the above lemmas, we get the following corollary.
\begin{corollary}
\label{cor:composePRV_coupling}
Let $Y_1,Y_2,\dots,Y_k$ be random variables supported on and let $\tY_i$ be the discretization of $Y_i$ produced by \textsf{DiscretizePRV} algorithm with mesh size $h=\frac{h_0}{\sqrt{\frac{k}{2}\log \frac{2}{\eta_0}}}$ and truncation parameter $L$. Then
$$\left|(Y_1+Y_2+\dots+Y_k) - (\tY_1 \oplus \tY_2 \oplus \dots \oplus \tY_k)\right|\le_\eta h_0$$ where 
 $$\eta = \Pr\lb\left|\sum_{i=1}^k \tY_i\right|\ge L\rb + \sum_{i=1}^k \Pr[|Y_i|\ge L]+ \eta_0.$$ Furthermore, we can bound $$\Pr\lb\left|\sum_{i=1}^k \tY_i\right|\ge L\rb \le \Pr\lb\left|\sum_{i=1}^k Y_i\right|\ge L-h_0\rb+\sum_{i=1}^k \Pr[|Y_i|\ge L]+ \eta_0.$$
\end{corollary}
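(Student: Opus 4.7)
The strategy is to chain together the four preceding lemmas via the triangle inequality (Lemma~\ref{lem:coupling_triangle_ineq}), handling in sequence: (i) truncation error, (ii) discretization error, (iii) Hoeffding cancellation in the sum, and (iv) circular-wrap error.

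\textbf{Step 1 (per-coordinate approximation).} Apply Lemma~\ref{lem:discretizePRV_coupling} to each PRV $Y_i$. This yields a shift $\mu_i$ and a random variable $\tY_i$ such that $\E[\tY_i]=\E[Y_i^L]$, $|Y_i^L-(\tY_i-\mu_i)|\le_{0}\tfrac{h}{2}$, and $|Y_i-Y_i^L|\le_{\eta_i}0$ with $\eta_i=\Pr[|Y_i|\ge L]$. Crucially the \textsf{DiscretizePRV} algorithm was designed so that the mean matches, which is what will let us invoke the Hoeffding-type lemma below.

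\textbf{Step 2 (Hoeffding for the independent sum).} The $Y_i^L$ are independent, the $\tY_i$ are independent, they have matching means, and they are $\tfrac{h}{2}$-close with probability $1$. So Lemma~\ref{lem:coupling_sum_independent2} (applied to $\{Y_i^L\}$ and $\{\tY_i\}$ with parameter $h/2$) gives
\[
\left|\sum_{i=1}^k Y_i^L-\sum_{i=1}^k \tY_i\right|\le_{\eta_0}\tfrac{h}{2}\sqrt{2k\log(2/\eta_0)}=h_0,
\]
the last equality being the defining choice of the mesh $h=h_0/\sqrt{(k/2)\log(2/\eta_0)}$. Combining with the truncation couplings of Step 1 via Lemma~\ref{lem:coupling_triangle_ineq} gives
\[
\left|\sum_{i=1}^k Y_i-\sum_{i=1}^k \tY_i\right|\le_{\eta_0+\sum_i\Pr[|Y_i|\ge L]}h_0.
\]

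\textbf{Step 3 (circular wrap).} Each $\tY_i$ lives in $[-L,L]$, so Lemma~\ref{lem:convolution_wrapping_around_error} yields
\[
\left|\sum_{i=1}^k \tY_i-(\tY_1\oplus_L\cdots\oplus_L\tY_k)\right|\le_{\Pr[|\sum_i\tY_i|\ge L]}0.
\]
One more application of Lemma~\ref{lem:coupling_triangle_ineq} chains Steps 2 and 3 and yields the stated bound on $\eta$.

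\textbf{Step 4 (controlling $\Pr[|\sum_i\tY_i|\ge L]$ via $Y$).} Finally, to express the wrap-around failure probability in terms of $Y=\sum_i Y_i$ itself, use the coupling from Step 2: on the event $|\sum_i Y_i-\sum_i\tY_i|\le h_0$, if $|\sum_i\tY_i|\ge L$ then $|\sum_i Y_i|\ge L-h_0$. A union bound then gives
\[
\Pr\!\left[\left|\sum_i\tY_i\right|\ge L\right]\le \Pr\!\left[\left|\sum_i Y_i\right|\ge L-h_0\right]+\eta_0+\sum_{i=1}^k\Pr[|Y_i|\ge L],
\]
which is exactly the ``furthermore'' clause. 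The main thing to watch is the bookkeeping between the factor of $h/2$ coming from the discretization and the factor of $h\sqrt{2k\log(2/\eta_0)}$ coming from the Hoeffding lemma, so that they combine to exactly $h_0$; once the mesh $h$ is chosen as in the hypothesis this is immediate, and no step requires any substantive new calculation beyond applying the four preceding lemmas in sequence.
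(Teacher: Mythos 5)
Your proof is correct and follows essentially the same route as the paper: truncation via Lemma~\ref{lem:discretizePRV_coupling}, the Hoeffding cancellation via Lemma~\ref{lem:coupling_sum_independent2} applied to the truncated/discretized pair with parameter $h/2$ (which is why the factor works out to exactly $h_0$), the wrap-around error via Lemma~\ref{lem:convolution_wrapping_around_error}, all chained by Lemma~\ref{lem:coupling_triangle_ineq}. The only cosmetic difference is in the ``furthermore'' clause: you phrase it as a union bound over the coupling event $|\sum_i Y_i - \sum_i \tY_i| \le h_0$, while the paper writes out the explicit decomposition $\tY_i = (\tY_i - Y_i^L) + (Y_i^L - Y_i) + Y_i$ and union-bounds the three pieces; these are the same argument.
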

\begin{proof}
	Let $Y^L\equiv Y_i\big|_{|Y_i|\le L}$ be the truncation of $Y_i.$
	By Lemma~\ref{lem:discretizePRV_coupling}, $|Y_i^L - (\tY_i -\mu_i)| \le_{0} \frac{h}{2}$ for some $\mu_i$ and $|Y_i^L - Y_i|_{\xi_i} \le 0$ where $\xi_i = \Pr[|Y_i|\ge L].$ Now applying the triangle inequality for coupling approximations (Lemma~\ref{lem:coupling_triangle_ineq}), we have 
	$$\left|\sum_i Y_i - \sum_i Y_i^L\right| \le_{\eta_1} 0$$ where $\eta_1=\sum_i \xi_i=\sum_i \Pr[|Y_i|\ge L].$  By Lemma~\ref{lem:coupling_sum_independent2}, we have 
	$$\left|\sum_i Y_i^L - \sum_i \tY_i\right| \le_{\eta_0} \frac{h}{2}\cdot \sqrt{2k\log \frac{2}{\eta_0}} = h \sqrt{\frac{k}{2}\log \frac{2}{\eta_0}}=h_0.$$
	By Lemma~\ref{lem:convolution_wrapping_around_error}, 
	$$\left|\sum_{i=1}^k \tY_i - \lp \tY_1 \oplus_L \tY_2 \oplus_L \cdots \oplus_L \tY_k\rp\right| \le_{\eta_2} 0$$ where $\eta_2 = \Pr\lb\left|\sum_{i=1}^k \tY_i\right|\ge L\rb.$ Finally applying triangle inequality (Lemma~\ref{lem:coupling_triangle_ineq}) once again, we get:
	$$\left|(Y_1+Y_2+\dots+Y_k) - (\tY_1 \oplus_L \tY_2 \oplus_L \dots \oplus_L \tY_k)\right|\le_\eta h_0$$ where $\eta=\eta_0+\eta_1+\eta_2$. We can bound $\Pr\lb\left|\sum_{i=1}^k \tY_i\right|\ge L\rb$ as:
	\begin{align*}
		\Pr\lb\left|\sum_i \tY_i\right|\ge L\rb &= \Pr\lb\left|\sum_i (\tY_i-Y_i^L) + \sum_i (Y_i^L - Y_i)  +\sum_i Y_i \right|\ge L\rb\\
		 &\le \Pr\lb\left|\sum_i (\tY_i-Y_i^L) \right|+\left|\sum_i (Y_i^L - Y_i) \right| +\left|\sum_i Y_i \right|\ge h_0 + 0 + L-h_0\rb\\
		 &\le \Pr\lb\left|\sum_i (\tY_i-Y_i^L) \right| > h_0\rb +\Pr\lb\left|\sum_i (Y_i^L - Y_i) \right| > 0\rb +\Pr\lb\left|\sum_i Y_i \right|\ge L-h_0\rb\\
		 &\le \eta_0 + \eta_1 + \Pr\lb\left|\sum_{i=1}^k Y_i\right|\ge L-h_0\rb.
	\end{align*}
\end{proof}

\begin{proof}[Proof of Theorem~\ref{thm:approximation}]
Combining Corollary~\ref{cor:composePRV_coupling} (with $h_0=\eps_\error$) and Lemma~\ref{lem:coupling_to_privacy_curves}, we have Theorem~\ref{thm:approximation}.	
\end{proof}

\subsection{Tail Bound for PRVs}
\label{sec:appendix_numerical_composition}

To finish the proof of our main theorem (Theorem~\ref{thm:approximation_eps_upper_bound}, we need a tail bound on PRVs in terms of their privacy curves. First, we need a lemma relating the PRVs of a privacy curve $\delta(P||Q)$ with the PRVs of $\delta(Q||P)$.

\begin{proposition}
	\label{prop:finv_prv}
	Let $(X,Y)$ be the PRVs for a privacy curve $\delta(P||Q)$. Then the PRVs for the privacy curve $\delta(Q||P)$ are $(-Y,-X).$
\end{proposition}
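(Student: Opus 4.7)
The plan is to derive this proposition directly from the log-likelihood characterization given in Theorem~\ref{thm:PRV_LLRV}, rather than verifying the four conditions of Definition~3.1 from scratch. The key algebraic observation is simply that $\log(P(\omega)/Q(\omega)) = -\log(Q(\omega)/P(\omega))$, so swapping the roles of $P$ and $Q$ in the log-likelihood just negates the random variable while otherwise leaving everything intact.

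Concretely, I would proceed in three short steps. First, apply Theorem~\ref{thm:PRV_LLRV} to $\delta(P||Q)$ to record that
\[
X = \log\!\lp \tfrac{Q(\omega)}{P(\omega)}\rp,\ \omega\sim P, \qquad Y = \log\!\lp \tfrac{Q(\omega)}{P(\omega)}\rp,\ \omega\sim Q.
\]
Second, apply the same theorem to $\delta(Q||P)$ to obtain its PRVs $(X',Y')$:
\[
X' = \log\!\lp \tfrac{P(\omega)}{Q(\omega)}\rp,\ \omega\sim Q, \qquad Y' = \log\!\lp \tfrac{P(\omega)}{Q(\omega)}\rp,\ \omega\sim P.
\]
Third, use $\log(P/Q) = -\log(Q/P)$ to identify $X' = -Y$ (since sampling $\omega \sim Q$ on the negated log-ratio matches the definition of $-Y$) and $Y' = -X$ (since sampling $\omega \sim P$ on the negated log-ratio matches $-X$). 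This gives the claim.

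I do not anticipate a genuine obstacle: the identity is essentially unpacking Theorem~\ref{thm:PRV_LLRV}. The one small care-point is the measure-theoretic formalism from the footnote to Theorem~\ref{thm:PRV_LLRV}: if $P$ and $Q$ are not mutually absolutely continuous, the ratios must be interpreted as Radon--Nikodym derivatives and points where $P$ or $Q$ vanish will induce point masses at $\pm\infty$ in $X$ or $Y$. Under the convention $-(-\infty)=\infty$ on $\bR$, negation exchanges $\infty$ and $-\infty$ consistently, so the boundary conditions $Y'(-\infty) = X(\infty) = 0$ and $X'(\infty) = Y(-\infty) = 0$ follow from those for $(X,Y)$, and uniqueness (noted in the remark after Theorem~\ref{thm:PRV_privacycurve}) then lets us conclude the proposition without further work.
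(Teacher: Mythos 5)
Your proof is correct and rests on the same engine as the paper's: Theorem~\ref{thm:PRV_LLRV} together with the observation that negating the log-likelihood ratio swaps the two sides. The only difference is the pair of distributions you feed into the theorem. You apply Theorem~\ref{thm:PRV_LLRV} directly to the pair $(Q,P)$ and invoke $\log(P/Q)=-\log(Q/P)$, whereas the paper first notes $\delta(P||Q)=\delta(X||Y)$ so $\delta(Q||P)=\delta(Y||X)$, then applies the theorem to the pair $(Y,X)$ and uses the internal structure $X(t)/Y(t)=e^{-t}$. Your route is marginally more self-contained: the paper's intermediate step $\delta(Q||P)=\delta(Y||X)$ is asserted without comment and, strictly speaking, requires the observation that $(X,Y)$ encode the full trade-off in both directions (or the tiny computation $\delta(-Y||-X)=\delta(Y||X)$), which your version sidesteps entirely by never leaving the $(P,Q)$ representation. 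Both give the same two-line conclusion, and your remarks about $\pm\infty$ being exchanged under negation and about uniqueness of PRVs (via the remark after Theorem~\ref{thm:PRV_privacycurve}) are exactly the right care-points to flag.
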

\begin{proof}
	Let $(\tX,\tY)$ be the PRVs for $\delta(Q||P)$. We know that $\delta(P||Q)=\delta(X||Y)$. So $\delta(Q||P)=\delta(Y||X).$ Then by Theorem~\ref{thm:PRV_LLRV},
	\begin{align*}
		\tX &= \log\lp \frac{X(t)}{Y(t)}\rp \text{ where } t \sim Y\\
		&= \log\lp e^{-t}\rp \text{ where } t \sim Y\\
		& = -Y.
	\end{align*}
\begin{align*}
		\tY &= \log\lp \frac{X(t)}{Y(t)}\rp \text{ where } t \sim X\\
		&= \log\lp e^{-t}\rp \text{ where } t \sim X\\
		& = -X.
	\end{align*}
\end{proof}

Now, we show our tail bound, which shows the PRVs $(X,Y)$ for a $(\epsilon,\delta)$-DP algorithm satisfies roughly that $\Pr(|Y| \geq \epsilon + 2) \leq 2 \delta$.

\PRVtail*
\begin{proof}
	We have $\delta(X||Y)\le f_{\eps,\delta}$ and $\delta(Y||X)\le f_{\eps,\delta}$ where $f_{\eps,\delta}$ is the privacy curve of a $(\epsilon,\delta)$-DP algorithm. 
	By Theorem~\ref{thm:PRV_LLRV}, we have 
	\begin{align*}
		\delta &\ge \int_0^\infty \Pr[Y\ge \eps+s] e^{-s} ds\\
		       &\ge \int_0^t \Pr[Y\ge \eps+s] e^{-s} ds\\
		       &\ge \Pr[Y\ge \eps+t]\int_0^t e^{-s} ds\\
		       &\ge \Pr[Y\ge \eps+t] (1-e^{-t}).
	\end{align*}

	By Proposition~\ref{prop:finv_prv}, the PRVs for $\delta(Y||X)$ are $(-Y,-X)$. Therefore by a similar argument, we have
	$$\Pr[X \le -\eps-t]=\Pr[-X\ge \eps+t] \le \frac{\delta}{1-e^{-t}}.$$
	Finally, note that $Y(s)=e^s X(s)$ for all $s\in \R$ and $Y(-\infty)=0$ by the definition of PRVs. Therefore $$\Pr[Y\le -\eps-t] \le e^{-\eps-t}\Pr[X\le -\eps-t].$$
	Therefore we have:
	\begin{align*}
			\Pr[|Y|\ge \eps+t] &= \Pr[Y \ge \eps+t] + \Pr[Y \le -\eps-t]\\
			 &\le \Pr[Y \ge \eps+t] + e^{-\eps-t}\Pr[X \le -\eps-t]\\
			 &\le \lp 1 + e^{-\eps-t}\rp\frac{\delta}{1-e^{-t}}.
	\end{align*}
\end{proof}

\subsection{Proof of Theorem~\ref{thm:approximation_eps_upper_bound}}
Now, we can prove our main theorem.
\mainthm*
\begin{proof}
	By Lemma~\ref{lem:PRVs_tailbounds}, 
	\begin{align*}
		\Pr[|Y_i| \ge L] &= \Pr[|Y_i| \ge L-2 + 2]\\
		&\le \delta_{Y_i}(L-2) \cdot \frac{1+e^{-L}}{1-e^{-2}}\\
		&\le \delta_{Y_i}(L-2) \cdot \frac{1+e^{-2}}{1-e^{-2}}\\
		&\le \delta_{Y_i}(L-2) \cdot \frac{4}{3}.
	\end{align*}
	Therefore we have $$\sum_{i=1}^k \Pr[|Y_i|\ge L]\le \frac{4}{3} \sum_{i=1}^k \delta_{Y_i}(L-2) \le \frac{4}{3} \cdot \frac{\delta_\error}{8} = \frac{\delta_\error}{6}.$$
	Similarly $$\Pr[|Y| \ge L-\eps_\error] \le \frac{4}{3}\delta_Y(L-2-\eps_\error)  \le \frac{\delta_\error}{3}.$$ Therefore by Theorem~\ref{thm:approximation}, setting $\eta_0=\frac{\delta_\error}{6}$, we get the desired result.
	
	For the runtime, we note that the bottleneck of our algorithm is to compute the convolution, which can be done using FFT. In total, we need to compute $b+1$ many FFT for $b$ distinct algorithms, one for each for computing the Fourier transform and one of computing the inverse Fourier transform. Since the length of the array for the FFT is bounded by $O(L/h)$, this costs $O(b L/h \log(L/h))$ in total.
	
	The step $\delta_{\tY}(\eps) = \E_{\tY}\lb \lp 1-e^{\eps-\tY}\rp_+\rb$ can be computed in linear time by first computing the CDF of $\tY$ and the prefix sum $\E_{\tY \leq \alpha}\lb e^{-\tY}\rb$ for all $\alpha$.
\end{proof}

\end{document}